\newcommand*{\tsp}{%
	{\mathpalette\@tsp{}}%
}
\newcommand*{\@tsp}[2]{%
	\raisebox{\depth}{$\m@th#1\intercal$}%
}
\begin{document}

\title{Decentralized State Estimation In A Dimension-Reduced Linear Regression 
}

\author{\IEEEauthorblockN{Robin Forsling\IEEEauthorrefmark{1}, Fredrik Gustafsson\IEEEauthorrefmark{3}, Zoran Sjanic, and Gustaf Hendeby\IEEEauthorrefmark{2}}\\
\IEEEauthorblockA{\IEEEauthorrefmark{1} Student~Member, IEEE, \IEEEauthorrefmark{2} Senior~Member, IEEE, \IEEEauthorrefmark{3} Fellow, IEEE \\
\textit{Dept.\ of Electrical Engineering, Linköping University}, Linköping, Sweden}
}

\maketitle

\begin{abstract}
Decentralized state estimation in a communication-constrained sensor network is considered. The exchanged estimates are dimension-reduced to reduce the communication load using a linear mapping to a lower-dimensional space. The mean squared error optimal linear mapping depends on the particular estimation method used. Several dimension-reducing algorithms are proposed, where each algorithm corresponds to a commonly applied decentralized estimation method. All except one of the algorithms are shown to be optimal. For the remaining algorithm, we provide a convergence analysis where it is theoretically shown that this algorithm converges to a stationary point and numerically shown that the convergence rate is fast. A message-encoding solution is proposed that allows for efficient communication when using the proposed dimension reduction techniques. We also derive different properties from the proposed framework and show its superiority in relation to baseline methods. The applicability of the different algorithms is demonstrated using a simple fusion example and a more realistic target tracking scenario.
\end{abstract}

\begin{IEEEkeywords}
Decentralized estimation, sensor networks, communication constraints, dimension reduction, correlated estimates.
\end{IEEEkeywords}


\section{Introduction} \label{sec:intro}

Decentralized estimation has been studied extensively over the last few decades. The primary goal is to improve estimates in a \emph{sensor network} (\abbrSN) while maintaining the benefits of a decentralized design: \emph{robustness} and \emph{modularity} \cite{Uhlmann2003IF}. A disadvantage of a decentralized \abbrSN is that in general only suboptimal estimates \cite{Forsling2022TSP} can be derived. This is in contrast to a \emph{centralized} \abbrSN, where it is possible to obtain \emph{mean squared error} (\abbrMSE) optimal estimates. However, centralized SNs have higher communication requirements because a central agent processes all of the information that the sensors extract. Despite the fact that the communication cost is reduced in decentralized SNs due to local preprocessing of information, decentralized SNs also impose communication constraints \cite{Kimura2005ITCC}. Developing communication-reducing techniques is therefore crucial in general network-centric problems \cite{Chen2022TSIPN,Liu2023TSIPN}.

\begin{figure}[t]
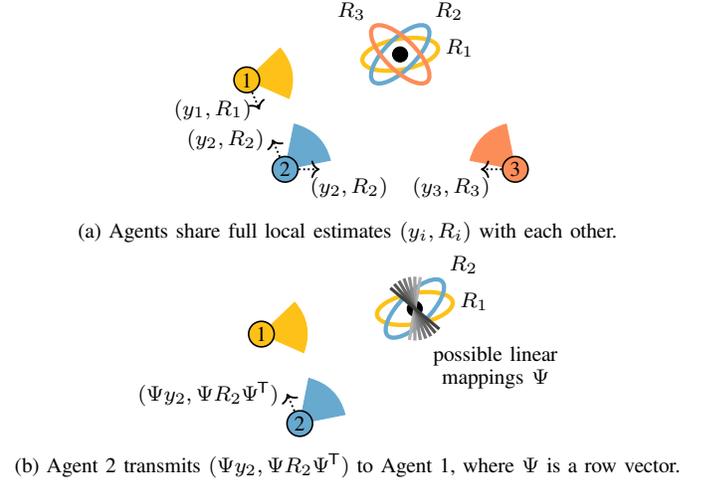

	\centering
	\begin{subfigure}[b]{1.0\columnwidth}
		\centering
		\begin{tikzpicture}[scale=.17]
			\input{fig/example_intro_full.tex}
		\end{tikzpicture}
		\caption{Agents share full local estimates $(y_i,R_i)$ with each other.}
		\label{fig:intro:example-full}
	\end{subfigure}
	\begin{subfigure}[b]{1.0\columnwidth}
		\centering
		\begin{tikzpicture}[scale=.17]
			\input{fig/example_intro_reduced.tex}
		\end{tikzpicture}
		\caption{Agent~2 transmits $(\M y_2,\M R_2\Mt)$ to Agent~1, where $\M$ is a row vector. } 
		\label{fig:intro:example-reduced}
	\end{subfigure}
	\caption{A decentralized sensor network where the target is given by a black circle. Colored circles and cones resemble agents and sensors, respectively. Dashed arrows illustrate communication links. } 
	\label{fig:intro:example}
\end{figure}

A decentralized \abbrSN is illustrated in Fig.~\ref{fig:intro:example-full}. Multiple agents use local measurements to derive local estimates of a common target state $x$, \ie, $(y_i,R_i)$, where $y_i$ is the local estimate of the Agent~$i$ and $R_i$ is the covariance of $y_i$. Local estimates are exchanged between the agents for further improvement. To reduce the communication load, an agent, \eg, Agent~2, can instead exchange a linear combination of $y_2$, $\M y_2$, with covariance $\M R_2\Mt$, where $\M$ is a wide matrix; see Fig.~\ref{fig:intro:example-reduced}. The problem is then how to best design $\M$.

In this paper, we consider a decentralized \abbrSN where the task is to compute optimal estimates in a communication-constrained \abbrSN. To cope with the communication constraints, only \emph{dimension-reduced} (\abbrDR) estimates are exchanged. The dimension reduction is accomplished by the matrix $\M$ which hence determines the overall estimation performance.

The contributions are\footnote{An edited version of this manuscript is included in Chapter~5 of the first author's PhD thesis \cite{Forsling2023Phd}.}:
\begin{itemize}
	\item We extend the previous work in \cite{Forsling2022Fusion} from row vector valued $\M$ to matrix valued $\M$.
	\item A convergence analysis is provided for the algorithm developed for computing $\M$ in the \emph{covariance intersection} (\abbrCI, \cite{Julier1997ACC}) case. 
	\item The methodology used to select $\M$ is extended to the \emph{largest ellipsoid} (\abbrLE, \cite{Benaskeur2002IECON}) method. We formulate conditions for when the derived $\M$ is optimal.
	\item A message encoding algorithm is proposed that encodes all data contained in $(\M y_2,\M R_2\Mt,\M)$ for high communication efficiency. 
	\item We discuss and derive results dedicated to the singular case and how to assign the number of rows in $\M$.
\end{itemize}

\section{Background} \label{sec:background}

In network-centric problems, multiple agents are cooperating for a common goal, \eg, estimating a common target. The focus is on estimation in decentralized SNs, where an agent consists of sensors and a processing unit. We begin by describing different network aspects. Prior work is reviewed at the end of this section.

\subsection{Network-Centric Estimation} \label{sec:network-centric-estimation}

Network-centric estimation is a well-studied subject \cite{Liggins2009-ch17}. By utilizing multiple sensors and the communication of information, it is possible to enhance the quality of estimates. This information is then exchanged for further refinement. A centralized \abbrSN is illustrated in Fig.~\ref{fig:bkg:csn}, where sensor measurements are communicated to a central processing unit where estimates are derived. Two major drawbacks of the centralized \abbrSN include vulnerability to the failure of critical nodes and a high communication load since all measurements are transmitted to the central node. 


\begin{figure}[tb]
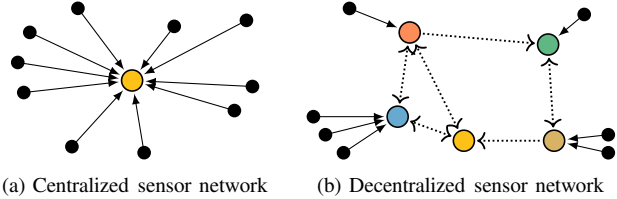

	\centering
	\begin{subfigure}[b]{0.475\columnwidth}
		\centering
		\begin{tikzpicture}[scale=.08]
			\input{fig/centralized_sensor_network.tex}
		\end{tikzpicture}
		\caption{Centralized sensor network}
		\label{fig:bkg:csn}
	\end{subfigure}
	\begin{subfigure}[b]{0.475\columnwidth}
		\centering
		\begin{tikzpicture}[scale=.08]
			\input{fig/decentralized_sensor_network.tex}
		\end{tikzpicture}
		\caption{Decentralized sensor network}
		\label{fig:bkg:dsn}
	\end{subfigure}
	\caption{Sensor networks. Black and colored circles resemble sensor nodes and processing units, respectively. The solid and dotted arrows illustrate sensor-to-processing unit and inter-processing unit communication, respectively.} 
	\label{fig:bkg:sn}
\end{figure}

This paper considers decentralized SNs. A decentralized SN is illustrated in Fig.~\ref{fig:bkg:dsn}. This type of network is, \eg, realized by wireless ad hoc networks such as wireless SNs \cite{Toh1997WANET}. According to \cite{Grime1994CEP}, a decentralized SN is characterized by the following properties:
\begin{enumerate}[(i)]
	\item There is no single point of failure.
	\item Agents communicate on a agent-to-agent basis without a central communication management.
	\item There is no global knowledge about the network topology.
\end{enumerate}
In a decentralized \abbrSN, measurements are passed to a local processing unit, where local estimates are computed. The local estimates are then exchanged between the processing units to enhance the tracking quality through fusion. A decentralized \abbrSN has the following advantages over its centralized counterpart \cite{Julier2009-ch14,Uhlmann2003IF}:
\begin{itemize}
	\item \emph{Robustness}. Since there is no single point of failure, a decentralized design is inherently fault-tolerant. 
	\item \emph{Modularity}. The network is decomposed into smaller, self-contained subsystems, which reduces the overall complexity and makes the network scalable.
	\item \emph{Flexibility}. Agents can connect to and disconnect from the network on-the-fly.
\end{itemize}

A decentralized \abbrSN is an example of distributed processing. However, the terms decentralized and distributed should not be used interchangeably. While distributed sensor networks have multiple interconnected processing units, there is typically a central coordinator or some global processing result is obtained, \eg, globally available estimates \cite{Castanedo2013TSWJ}. This work is limited to methods that are able to cope with the constraints imposed by (i)--(iii), thereby excluding many distributed processing techniques. For instance, the distributed Kalman filter algorithms of \cite{Khan2008TSP,Govaers2010Fusion} assume that the implemented filters follow predefined schemes, and the diffusion-based strategies of \cite{Lopes2008TSP,Cattivelli2010TSP} are developed under certain assumptions about the network topology.

\subsection{Prior Work} \label{sec:prior-work}

A crucial aspect of network-centric estimation problems is communication constraints \cite{Razzaque2013ACM}. Estimates and other target information need to be communicated to a legion of targets and to and from many agents. At some point, the communication link will become a bottleneck, and hence the finite size of the communication resource must be taken into account. There are also situations where the communication needs to be reduced for other reasons, \eg, to be able to operate with a low electromagnetic signature. Hence, there is a need to study DTT under communication constraints \cite{Kimura2005ITCC}.


In \cite{Zhang2003Fusion}, it is shown that the problem of finding an optimal linear mapping that compresses measurements to be exchanged boils down to an eigenvalue problem. In \cite{Chen2004CDC}, the problem is extended to the dynamic case. Distributed estimation based on \abbrDR data is handled in \cite{Zhu2005TSP} and \cite{Schizas2007TSP}. In the latter, a non-ideal communication channel is also considered. The authors of \cite{Fang2010TSP} address the problem of jointly assigning the dimensionality and deriving optimal data compression matrices. Performance bounds of data compression techniques are analyzed in \cite{Ma2014TSP}. A framework for dimension reduction, including data denoising for distributed algorithms, is proposed in \cite{Schizas2007TSP}. There is numerous additional work on closely related dimension-reduction problems; see, \eg, \cite{Schizas2015TSP,Fang2008SPL,Fang2011DSP}. 


An alternative strategy for reducing the communication load is to quantize the data to be exchanged; see, \eg, \cite{Ribeiro2006TSP,Msechu2012TSP}. A key aspect of quantization in \abbrDSN is how to preserve conservativeness when the communicated data is quantized \cite{Funk2021Sensors}. The authors of \cite{Kar2009ICASSP,Kar2012TIT} deals with network-centric estimation under quantized and imperfect communications. In this paper, however, it is assumed that quantization effects can be neglected.


\section{Fusion of Dimension-Reduced Estimates} \label{sec:problem}

This paper considers a decentralized \abbrSN, where only \abbrDR estimates are allowed to be exchanged. The task is to reduce dimensionality so that fusion performance is promoted.

Mathematical preliminaries are given at first. Then the problem is formalized. Several methods applicable for the fusion of \abbrDR estimates are provided. At the end, we define two baseline methods for communication reduction, with which we will compare the developed methods.

\subsection{Preliminaries} \label{sec:preliminaries}

Let $\reals$, $\realsn$, and $\realsmn$ denote the sets of all real numbers, $n$-dimensional real-valued vectors, and $m\times n$ real-valued matrices, respectively. Denote by $\psdsetn$ and $\pdsetn$ the sets of all $n\times n$ symmetric positive semidefinite (\abbrPSD) matrices and all $n\times n$ symmetric positive definite (\abbrPD) matrices. For $A,B\in\psdsetn$, the inequalities $A\succeq B$ and $A\succeq B$ are equivalent to $(A-B)\in\psdsetn$ and $(A-B)\in\pdsetn$, respectively. The expected value and the covariance of $a$ are given by $\EV(a)$ and $\cov(a)=\EV\left((a-\EV(a))(a-\EV(a))\trnsp\right)$, respectively. The cross-covariance of $a$ and $b$ is given by $\cov(a,b)=\EV\left((a-\EV(a))(b-\EV(b))\trnsp\right)$. The ellipsoid of $A\in\pdsetn$ is defined by the set of points $\calE(A)=\{ a\in\realsn \,|\, a\trnsp A\inv a = 1\}$. By $\rowspan(A)$, we denote the row span of $A$.

The state to be estimated is denoted by $x\in\realsnx$. It is assumed that\footnote{The suggested model implies that we adopt a \emph{Fisherian} view instead of the \emph{Bayesian} view. However, in the context of a Bayesian framework, the estimate $(y_1,R_1)$ can be interpreted as a prior.}
\begin{subequations} \label{eq:estimate-model}
\begin{align}
	y_1 &= x + v_1, & R_1 &= \cov(v_1), \\
	y_2 &= H_2x + v_2, & R_2 &= \cov(v_2),
\end{align}	
\end{subequations}
where $y_2\in\realsnb$, $H_2\in\realsnbnx$, and $R_{12}=\cov(v_1,v_2)$. Subscripts 1 and 2 refer to Agent~1 and Agent~2. In this context, $y_1$ and $y_2$ are considered to be estimates of $x$ and $H_2x$, respectively. However, all results hold if $y_1$ and $y_2$ are, \eg, measurements. The estimation error is given by $\xtilde=\xhat-x$, and $P$ is the covariance associated with $x$. An estimate $\xhat$ is \abbrMSE optimal if it minimizes $\trace(P)$. It is \emph{conservative} if
\begin{equation}
	P - \EV(\xtilde\xtilde\trnsp) \succeq 0.
	\label{eq:conservative-estimate}
\end{equation}

A generalized eigenvalue $\lambda(A,B)$ of $A,B$ is given by
\begin{equation}
	Au = \lambda Bu,
	\label{eq:gevp}
\end{equation}
where $u$ is a generalized eigenvector associated with $\lambda$. If $A,B\in\pdsetn$, then the \emph{generalized eigenvalue problem} (\abbrGEVP) in \eqref{eq:gevp} has $n$ solutions $(\lambda,u)$, not necessarily unique. The \ith generalized eigenvalue and eigenvector are denoted $\lambda_i$ and $u_i$, respectively. If $B=I$, then the \abbrGEVP in \eqref{eq:gevp} reduces to an ordinary \emph{eigenvalue problem} (\abbrEVP).

If $A\in\pdsetn$ it has en eigendecomposition
\begin{equation}
	A = U\Sigma U\trnsp = \sum_{i=1}^n \lambda u_iu_i\trnsp,
	\label{eq:eigendecomposition}
\end{equation}
where $U$ is an orthogonal matrix such that $u_i\trnsp u_j=\delta_{ij}$, where $\delta_{ij}$ is the Kronecker delta. The matrix $\Sigma\in\pdsetn$ is diagonal, with the eigenvalues of $A$ on its diagonal. (Generalized) eigenvalues are assumed to be given in descending order
\begin{equation}
	\lambdamax = \lambda_1 \geq \dots \geq \lambda_n = \lambdamin.
	\label{eq:ev-ordering}
\end{equation}

\subsection{Problem Statement}

Let $\M\in\realsmnb$, where $m<\nb$ and $\rank(\M)=m$. A \abbrDR estimate is given by $(\yM,\RM)$ constructed as
\begin{align}
	\yM &= \M y_2, & \RM &= \M R_2\Mt.
	\label{eq:dr-esimate}
\end{align} 
Let $m$ be fixed and assume that Agent~2 is about to transmit $(\yM,\RM)$ to Agent~1. The problem is to compute $\M$ such that when Agent~1 computes $(\xhat,P)$ by fusing $(y_1,R_1)$ and $(\yM,\RM)$, $\trace(P)$ is minimized. The considered fusion methods are defined in the next.

\subsection{Fusion Methods}

The \emph{Bar-Shalom-Campo} (\abbrBSC, \cite{Bar-Shalom1986TAES}) fuser, the \emph{Kalman fuser} (\abbrKF), \abbrCI, and the \abbrLE method are defined below for fusion of $(y_1,R_1)$ and $(\yM,\RM)$. These are all linear unbiased fusion methods \cite{Forsling2020Lic}. In case of \abbrBSC, \abbrKF, and \abbrCI, $\xhat$ is given as
\begin{equation}
	\xhat = K_1y_1 + \KM\yM = (I-\KM\M H_2)y_1 + \KM\yM,
	\label{eq:dr:linear-fusion}
\end{equation}
where $K_1=I-\KM\M H_2$ is follows from the unbiasedness constraint $\BBSM K_1&\KM\EBSM\BBSM I\\ \M H_2 \EBSM=I$. This means that $(\xhat,P)$ is fully specified by $(\KM,P)$.

\subsubsection{Bar-Shalom-Campo Fusion}

\abbrBSC for the fusion of two correlated estimates is originally presented in \cite{Bar-Shalom1986TAES} in the special case $H_2=I$. A more general version of \abbrBSC is derived in, \eg, \cite{Forsling2023Phd} for arbitrary $H_2$. The estimate $(\xhat,P)$ is specified by
\begin{align}
	\KM &= (R_1H_2\trnsp-R_{12})\Mt\SM\inv, &
	P &= R_1 - \KM\SM\KM\trnsp,
	\label{eq:method:bsc-dr}
\end{align}
where $\SM=\M H_2R_1H_2\trnsp\Mt+\RM-\M H_2R_{1\M}-R_{1\M}\trnsp H_2\trnsp\Mt$. If $R=\BBSM R_1&R_{12}\\R_{21}&R_2\EBSM\succ0$, then $\SM\succ0$. If $R=\BBSM R_1&R_{12}\\R_{21}&R_2\EBSM\succeq0$, then $\SM\succeq0$ might be singular. In this case, the pseudoinverse $\SM\pinv$ can be used instead of $\SM\inv$ to compute $\KM$ in \eqref{eq:method:bsc-dr}.

\subsubsection{Kalman Fusion}

The \abbrDR version of \abbrKF is recovered by setting $R_{12}=0$ in \eqref{eq:method:bsc-dr}. In this case the estimate is specified by
\begin{align}
	\KM &= R_1H_2\trnsp\Mt\SM\inv, &
	P &= R_1 - \KM\SM\KM\trnsp,
	\label{eq:method:kf-dr}
\end{align}
where $\SM=\M H_2R_1H_2\trnsp\Mt+\RM$.

\subsubsection{Covariance Intersection}

A general version of \abbrCI is defined in \cite{Uhlmann2003IF}. Given this it is possible to define
\begin{subequations} \label{eq:method:ci-dr}
\begin{align}
	\KM &= (1-\omega)P H_2\trnsp\Mt\RM\inv, \\
	P &= (\omega R_1\inv + (1-\omega)H_2\trnsp\Mt\RM\inv\M H_2 )\inv,
\end{align}		
\end{subequations}
where $\omega\in(0,1]$. Given that $(y_1,R_1)$ and $(\yM,\RM)$ are conservative, it has been shown that \abbrCI produces conservative estimates for all admissible $\omega$ \cite{Julier1997ACC}. If $R_{12}$ is completely unknown, then \abbrCI is an optimal conservative estimator \cite{Reinhardt2015SPL}.

\subsubsection{The Largest Ellipsoid Method}

The original version of \abbrLE, applicable for $H_2=I$, is derived in \cite{Benaskeur2002IECON}. A generalization of \abbrLE for arbitrary $H_2$ is proposed in \cite{Forsling2020Lic}. This version can be adapted for \abbrDR estimates by replacing $y_2$, $R_2$, and $H_2$ by $\yM$, $\RM$, and $\M H_2$, respectively. In this case, $(\xhat,P)$ is computed according to
\begin{align}
	\xhat &= PT\inv\iota, &
	P &= (T\inv\calI T\invtrnsp)\inv, 
	\label{eq:method:le-dr:estimate}
\end{align}	
where $T=T_2T_1$,
\begin{subequations} \label{eq:method:le-dr:transformed-domain}
\begin{align*}
	\iota_1 &= R_1\inv y_1, & \calI_1 &= R_1\inv, \\
	\iota\subM &= H_2\trnsp\Mt\RM\inv\yM, & \calI\subM &= H_2\trnsp\Mt\RM\inv\M H_2, \\
	\calI_1 &= U_1\Sigma_1 U_1\trnsp, & T_1 &= \Sigma_1^{-\frac{1}{2}}U_1\trnsp, \\
	T_1\calI\subM T_1\trnsp &= U\subM\Sigma\subM U\subM\trnsp, & T_2 &= U\subM\trnsp, \\
	\iota_1' &= T\iota_1, & \calI_1' &= T\calI_1T\trnsp = I, \\
	\iota\subM' &= T\iota\subM, & \calI\subM' &= T\calI\subM T\trnsp,
\end{align*}	
\end{subequations}
and
\begin{equation}
	([\iota]_i,[\calI]_{ii}) = 
	\begin{cases}
		([\iota_1']_i,[\calI_1']_{ii})	, &\text{ if } 1 \geq [\calI\subM']_{ii}, \\
		([\iota\subM']_i,[\calI\subM']_{ii})	, &\text{ if } 1 < [\calI\subM']_{ii},
	\end{cases}
	\label{eq:method:le-dr:transformed-estimate}
\end{equation}
for each $i=1,\dots,\nx$.

\begin{rmk}
By property (ii) of Sec.~\ref{sec:network-centric-estimation} the agents communicate on an agent-to-agent basis only. Hence, there is no restriction in assuming that Agent~2 communicates with Agent~1. Since all of the previously described fusion methods can be applied sequentially this is true even when multiple agents simultenously transmit estimates to Agent~1. However, sequential pairwise optimal fusion in general leads suboptimal results compared to batch-wise optimal fusion.
\end{rmk}

\subsection{Baseline Methods for Communication Reduction}

The proposed methodology for fusion optimal dimension reduction is derived in Sec.~\ref{sec:gevo}. This methodology is compared to the following two techniques for communication reduction.

\subsubsection{The Principal Component Optimization Method} \label{sec:pco}

Popular methods for \abbrDR in general are the \emph{principal component analysis} (\abbrPCA, \cite{Pearson1901PCA}) and the closely related \emph{Karhunen-Lo\`{e}ve transform} \cite{Wang2012KLT}. Both of these approaches utilize eigendecomposition. It should be pointed out that there are many other closely related approaches for \abbrDR, and these concepts are sometimes used interchangeably, \eg, low-rank strategies \cite{Scharf1987TASSP,Honig2002TCOM,DeLamare2007TSP}. 

Here, the \abbrPCA is utilized by letting $\M$ to be defined by the eigenvectors $u_{\nb-m+1},\dots,u_{\nb}$ corresponding to the $m$ smallest eigenvalues of $R_2$. The resulting method is referred to as the \emph{principal component optimization} (\abbrPCO) method and is provided in Algorithm~\ref{alg:pco}. The mapping $\M$ computed in this way solves the \abbrPCA problem
\begin{equation}
	\begin{aligned}
		& \underset{\M}{\minimize} & & \trace( \M R_2\Mt ) \\
		& \subjectto & & \M\Mt = I.
	\end{aligned}
\end{equation}

\begin{algorithm}[tb]
	\caption{\abbrPCO}
	\label{alg:pco}
	\begin{footnotesize}
	\begin{algorithmic}[0]
		\Input $R_2\in\pdsetnb$ and $m\leq n_2$
		\begin{enumerate}
			\item Let $R_2=\sum_{i=1}^{n_2}\lambda_i u_iu_i\trnsp$, where $\lambda_1\geq\dots\geq\lambda_{n_2}$ and $u_i\trnsp u_j=\delta_{ij}$.
			\item Define $\M=\col(u_{\nb-m+1}\trnsp,\dots,u_{\nb}\trnsp)$.
		\end{enumerate}
		\Output $\M$
	\end{algorithmic}
	\end{footnotesize}
\end{algorithm}

\subsubsection{The Diagonal Covariance Approximation} \label{sec:dca}

The \emph{diagonal covariance approximation} (\abbrDCA) is proposed in \cite{Forsling2019Fusion} as a framework for reduced communication load. Most of the \abbrDCA methods are based on Agent~2 transmitting $(y_2,\Ds_2)$ to Agent~1, where $\Ds_2\succeq R_2$ is diagonal\footnote{The condition $\Ds_2\succeq R_2$ is imposed to preserve conservativeness.}. In the current scope, we will only use the \emph{eigenvalue based scaling} method (DCA-EIG) defined in Algorithm~\ref{alg:dca-eig}. In \cite{Forsling2023Phd} it is shown that $s$ computed as in Algorithm~\ref{alg:dca-eig} is the smallest $s$ such that $sD_2\succeq R_2$.

\begin{algorithm}[tb]
	\caption{DCA-EIG}
	\label{alg:dca-eig}
	\begin{footnotesize}
	\begin{algorithmic}[0]
		\Input $R_2\in\pdsetnb$ 
		\begin{enumerate}
			\item Let $D_2\in\pdsetnb$ be a diagonal matrix, where $[D_2]_{ii}=[R_2]_{ii}$.
			\item Define $\Ds_2=s D_2$, where $s=\lambdamax(D_2\invsqrt R_2D_2\invsqrt)$.
		\end{enumerate}
		\Output $\Ds_2$
	\end{algorithmic}
	\end{footnotesize}
\end{algorithm}

\section{Fusion Optimal Dimension-Reduction} \label{sec:fusion-optimal}

The core problem is to find a matrix $\M$ that is optimal with respect to (\wrt) the fusion of $(y_1,R_1)$ and $(\yM,\RM)$, where $\yM=\M y_2$ and $\RM=\M R_2\Mt$. The fusion optimal $\M$ depends on the particular fusion method used. Fusion optimal $\M$ is derived for the \abbrBSC, \abbrKF, and \abbrLE. In the \abbrCI case, we provide a convergence analysis. The section is concluded with a theoretical comparison between the proposed method and \abbrPCO.

\subsection{The Generalized Eigenvalue Optimization Method} \label{sec:gevo}

The developed framework for fusion optimal dimension reduction is based on minimizing $\trace(P)$, where $P$ is according to the \abbrBSC fuser in \eqref{eq:method:bsc-dr}. Consider $P$ given in \eqref{eq:method:bsc-dr}, \ie, 
\begin{equation}
	P = R_1 - \KM \M S\Mt \KM\trnsp, 
	\label{eq:gevo:P-to-minimize}
\end{equation}
where 
\begin{align*}
	\KM &= (R_1H_2\trnsp-R_{12})\Mt(\M S\Mt)\inv, \\
	S &= H_2R_1H_2\trnsp + R_2 - H_2R_{12} - R_{12}\trnsp H_2\trnsp.
\end{align*}
We seek $\M\in\realsmnb$ that solves the problem\footnote{Similar formulations are considered in, \eg, \cite{Zhang2003Fusion,Zhu2005TSP}.}
\begin{equation}
	\begin{aligned}
		& \underset{\M}{\minimize} & & \trace(P),
	\end{aligned}	
	\label{eq:opt:dr:gevo}
\end{equation}
where $P$ is according to \eqref{eq:gevo:P-to-minimize}. The solution to \eqref{eq:opt:dr:gevo} is denoted $\Mopt$ and is derived in the next. 

Assume $S\succ0$. The degenerate case where $S\succeq0$ might be singular is discussed later on. Let $\Delta=R_1H_2\trnsp-R_{12}$ such that $P$ in \eqref{eq:gevo:P-to-minimize} is given by
\begin{equation}
	P = R_1 - \Delta\Mt(\M S\Mt)\inv\M\Delta\trnsp.
	\label{eq:gevo:P-to-minimize-2}
\end{equation}
Since $R_1$ is constant, minimization of $\trace(P)$ is equivalent to
\begin{equation}
	\begin{aligned}
		& \underset{\M}{\maximize} & & \trace\left( \Delta\Mt(\M S\Mt)\inv\M\Delta\trnsp \right).
	\end{aligned}
	\label{eq:opt:dr:gevo-max}
\end{equation}
Using the cyclic property of trace, this problem is equivalently expressed as
\begin{equation}
	\begin{aligned}
		& \underset{\M}{\maximize} & & \trace\left( (\M S\Mt)\inv \M Q\Mt\right),
	\end{aligned}
	\label{eq:opt:dr:gevo-max-2}	
\end{equation}
where $Q=\Delta\trnsp\Delta$. Since $S\in\pdsetnb$ and $\rank(\M)=m$, it follows that $\M S\Mt\in\pdsetm$. This implies that there exists an invertible matrix $T$ such that $T\M S\Mt T\trnsp=I$. Hence, without loss of generality (\wolog) it can be assumed that $\M S\Mt=I$. Moreover, $S\in\pdsetnb$ implies that $S=LL\trnsp$, where $L$ is invertible. Define $\Phi=\M L$. Then the following problem is equivalent to \eqref{eq:opt:dr:gevo-max-2}
\begin{equation}
	\begin{aligned}
		& \underset{\Phi}{\maximize} & & \trace( \Phi A\Phi\trnsp ) \\
		& \subjectto & & \Phi\Phi\trnsp = I,
	\end{aligned}
	\label{eq:opt:dr:gevo-max-3}
\end{equation}
where $A=L\inv QL\invtrnsp\in\psdsetm$. The solution to \eqref{eq:opt:dr:gevo-max-3} is given by the eigenvectors corresponding to the largest eigenvalues of $A$ \cite{Anstreicher2000SIAM}. By transforming back using $\Phi=\M L$, the solution to the original problem in \eqref{eq:opt:dr:gevo-max-2} is obtained. The solution includes a \abbrGEVP and is summarized in the following theorem.

\begin{thm} \label{thm:gevo:solution}
Assume $Q\in\psdsetnb$ and $S\in\pdsetnb$. Let $\M\in\realsmnb$, where $m\leq \nb$ and $\rank(\M)=m$. The solution to
\begin{equation}
	\begin{aligned}
		& \underset{\M}{\maximize} & & \trace\left( (\M S\Mt)\inv \M\trnsp Q\Mt \right),
	\end{aligned}
	\label{eq:thm:gevo:solution}
\end{equation}	
is given by $\Mopt=\col(u_1\trnsp,\dots,u_m\trnsp)$, where $u_i$ is a generalized eigenvector associated with $\lambda_i(Q,S)$, and $\lambda_1\geq\dots\geq\lambda_{\nb}$.
\end{thm}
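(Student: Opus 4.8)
The plan is to turn \eqref{eq:thm:gevo:solution} into a trace maximization over matrices with orthonormal rows and then invoke Ky Fan's maximum principle, following the reduction already sketched in the paragraph preceding the theorem. First I would record that the objective $f(\M)=\trace\bigl((\M S\Mt)\inv\M Q\Mt\bigr)$ (the problem \eqref{eq:opt:dr:gevo-max-2}) is invariant under $\M\mapsto T\M$ for every invertible $m\times m$ matrix $T$: indeed $\bigl((T\M)S(T\M)\trnsp\bigr)\inv(T\M)Q(T\M)\trnsp=T\invtrnsp(\M S\Mt)\inv\M Q\Mt\,T\trnsp$, and the cyclic property of the trace removes the outer factors $T\trnsp$ and $T\invtrnsp$. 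Since $S\succ0$ and $\rank(\M)=m$ force $\M S\Mt\succ0$, there is an invertible $T$ with $T\M S\Mt T\trnsp=I$, so we may assume $\M S\Mt=I$. Factoring $S=LL\trnsp$ with $L$ invertible and setting $\Phi=\M L$, the constraint $\M S\Mt=I$ becomes $\Phi\Phi\trnsp=I$ and $f(\M)=\trace(\Phi A\Phi\trnsp)$ with $A=L\inv Q L\invtrnsp\in\psdsetnb$. Moreover $\M\mapsto\M L$ is a bijection carrying $\{\M:\M S\Mt=I\}$ onto the Stiefel manifold $\{\Phi:\Phi\Phi\trnsp=I\}$, so the reduced problem $\max\{\trace(\Phi A\Phi\trnsp):\Phi\Phi\trnsp=I\}$ has the same optimal value, and $\M$ is optimal for \eqref{eq:thm:gevo:solution} iff the corresponding $\Phi=\M L$ is optimal for the reduced problem.

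Second, I would apply Ky Fan's maximum principle, the cited result of \cite{Anstreicher2000SIAM}: for a symmetric matrix $A$, $\max\{\trace(\Phi A\Phi\trnsp):\Phi\Phi\trnsp=I\}$ equals the sum of the $m$ largest eigenvalues of $A$, and the maximum is attained by taking the rows of $\Phi$ to be an orthonormal family of eigenvectors of $A$ associated with those eigenvalues. That $A$ is only \abbrPSD\ rather than \abbrPD\ causes no difficulty, since the principle holds for all symmetric $A$.

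Third, I would translate back to the pencil $(Q,S)$. If $Aw=\mu w$, then $u:=L\invtrnsp w$ satisfies $Qu=\mu Su$, so $u$ is a generalized eigenvector of $(Q,S)$ with generalized eigenvalue $\mu$; since $L$ is invertible, the spectrum of $A$ is exactly $\{\lambda_i(Q,S)\}$ counted with multiplicity, and $w=L\trnsp u$. Choosing orthonormal eigenvectors $w_1,\dots,w_m$ of $A$ for its $m$ largest eigenvalues and writing $u_i:=L\invtrnsp w_i$, the optimal map is $\Mopt=\Phi L\inv=\col(w_1\trnsp L\inv,\dots,w_m\trnsp L\inv)=\col(u_1\trnsp,\dots,u_m\trnsp)$, where $u_i$ is a generalized eigenvector for $\lambda_i(Q,S)$ with the eigenvalues ordered as in \eqref{eq:ev-ordering}; the normalization $w_i\trnsp w_j=\delta_{ij}$ is precisely $u_i\trnsp S u_j=\delta_{ij}$, i.e.\ $\Mopt S\Mopt\trnsp=I$. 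Substituting this $\Mopt$ makes $\Mopt Q\Mopt\trnsp$ diagonal with entries $\lambda_1(Q,S),\dots,\lambda_m(Q,S)$, so $f(\Mopt)=\sum_{i=1}^m\lambda_i(Q,S)$ and, via \eqref{eq:gevo:P-to-minimize-2}, $\trace(P)$ attains its minimum $\trace(R_1)-\sum_{i=1}^m\lambda_i(Q,S)$, consistent with Ky Fan.

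I do not expect a deep obstacle: the only genuinely non-elementary ingredient is Ky Fan's maximum principle, which I would import as a black box from \cite{Anstreicher2000SIAM}. The step requiring the most care is the reduction to the normalized case $\M S\Mt=I$ — one must verify both the invariance of $f$ under $\M\mapsto T\M$ and that the substitution $\Phi=\M L$ is \emph{onto} the Stiefel manifold, so that passing to the reduced problem genuinely preserves the supremum and its maximizers. Finally, non-uniqueness of the generalized eigenvectors when some $\lambda_i$ is repeated is harmless: it only makes $\Mopt$ non-unique, which the theorem statement already allows.
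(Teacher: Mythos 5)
Your proposal is correct and follows essentially the same route as the paper: reduce to the normalized case $\M S\Mt=I$ via the factorization $S=LL\trnsp$ and the substitution $\Phi=\M L$, invoke the Ky Fan--type trace-maximization result of \cite{Anstreicher2000SIAM} on the Stiefel manifold, and map the eigenvectors of $A=L\inv QL\invtrnsp$ back to generalized eigenvectors of the pencil $(Q,S)$. The only difference is that you spell out details the paper leaves implicit (the invariance of the objective under $\M\mapsto T\M$, the surjectivity of $\M\mapsto\M L$ onto the Stiefel manifold, and the explicit correspondence $u_i=L\invtrnsp w_i$), which is a faithful elaboration rather than a different argument.
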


\subsubsection{Change of Basis}

For efficient communication it is important that $\M$ has orthogonal rows and that $\RM$ is a diagonal matrix. This has to do with the encoding of $(\yM,\RM)$ and $\M$ upon transmission, and is further explained in Sec.~\ref{sec:communication}. As shown in the next it is always possible to transform $\M$ such that these criteria are satisfied. We start with the following proposition.

\begin{prop} \label{prop:change-of-basis}
Let $S\in\pdsetn$ and $A,B\in\realsmn$, where $m\leq n$. If $\rowspan(A)=\rowspan(B)$, then
\begin{align}
	A\trnsp(ASA\trnsp)\inv A &= B\trnsp(BSB\trnsp)\inv B.
\end{align}
\end{prop}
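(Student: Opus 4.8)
The plan is to reduce the identity to one elementary fact from linear algebra: two matrices in $\realsmn$ with the same row span and full row rank differ by an invertible left factor. So the first step is to note that for the statement even to make sense $ASA\trnsp$ must be invertible, which forces $\rank(A)=m$; since $\rowspan(B)=\rowspan(A)$, this also gives $\rank(B)=m$, hence $BSB\trnsp$ is invertible and the right-hand side is well defined too.

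Next I would construct the transition matrix. Since every row of $B$ lies in $\rowspan(A)$, there is $C\in\reals^{m\times m}$ with $B=CA$; by symmetry there is $D\in\reals^{m\times m}$ with $A=DB$. Substituting gives $(DC-I)A=0$, and because $A$ has full row rank its left null space is trivial, so $DC=I$ and $C$ is invertible with $C\inv=D$. This is the one place that needs a moment's thought — equal row spans only gives a one-sided relation a priori, and one has to use full row rank to upgrade it to a genuine change of basis; everything after this is bookkeeping.

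Then I would substitute and let the factors telescope. From $B=CA$ we get $BSB\trnsp=C\,(ASA\trnsp)\,C\trnsp$, so $(BSB\trnsp)\inv=C\invtrnsp(ASA\trnsp)\inv C\inv$, and therefore
\begin{equation*}
	B\trnsp(BSB\trnsp)\inv B = A\trnsp C\trnsp C\invtrnsp(ASA\trnsp)\inv C\inv C A = A\trnsp(ASA\trnsp)\inv A,
\end{equation*}
the outer pair $C\trnsp C\invtrnsp$ and the inner pair $C\inv C$ both collapsing to the identity. This finishes the argument.

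As a sanity check — and an alternative, coordinate-free route — I would verify that $\Pi_A := A\trnsp(ASA\trnsp)\inv A\,S$ is idempotent with range $\rowspan(A)$ and null space $S\inv\bigl((\rowspan A)^{\perp}\bigr)$, so that $\Pi_A$, and hence $\Pi_A S\inv = A\trnsp(ASA\trnsp)\inv A$, is determined by $A$ only through the subspace $\rowspan(A)$. That makes the content of the proposition transparent, though the transition-matrix computation above is the shorter write-up.
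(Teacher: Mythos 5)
Your proof is correct and follows essentially the same route as the paper: both arguments write $B=TA$ (your $C$) for an invertible transition matrix and let the $T$, $T\inv$ factors telescope through the inverse. The only difference is that you explicitly justify why the transition matrix is invertible via the full-row-rank argument, a detail the paper's proof asserts without comment.
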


\begin{proof}
Since $\rowspan(A)=\rowspan(B)$ there exists an invertible matrix $T$ such that $B=TA$. Hence
\begin{align*}
	A\trnsp(A SA\trnsp)\inv A 
	&= A\trnsp T\trnsp T\invtrnsp (ASA\trnsp)\inv T\inv T A \\
	&= A\trnsp T\trnsp (TASA\trnsp T\trnsp)\inv TA \\
	&= B\trnsp(BSB\trnsp)\inv B.
\end{align*}
\end{proof}

Consider $\Phi=\col(u_1\trnsp,\dots,u_m\trnsp)$, where $u_1,\dots,u_m$ are derived as in Theorem~\ref{thm:gevo:solution}. The rows of $\Phi\in\realsmnb$ span an $m$-dimensional subspace $\calV=\rowspan(\Mopt)\subseteq\realsnb$. In this sense, $\Mopt$ is a transformation
\begin{equation}
	\Phi \colon \realsnb \rightarrow \calV.
\end{equation}
For any two solutions $u_i$ and $u_j$ to $Qu = \lambda Su$, it is true that $u_i\trnsp Su_j=0$, while in general $u_i\trnsp u_j\neq0$ for $i\neq j$ \cite{Parlett1998}. Hence, $\Phi$ is not an orthogonal basis for $\calV$. Proposition~\ref{prop:change-of-basis} ensures that if $T$ is invertible, then $\Moptt(\Mopt R_2\Moptt)\inv\Mopt$ does not change if $\Mopt$ is substituted for $T\Mopt$. Moreover, since $\rowspan(\Mopt)=\rowspan(T\Mopt)$ \cite{Golub2013}, this can be exploited to derive $\Mopt$ such that $\Mopt R_2\Moptt$ is diagonal.

A \emph{Gram-Schmidt procedure} can be expressed as $\Omega\trnsp=\Phi\trnsp T$, where $\Omega$ has orthonormal rows and $T$ is invertible \cite{Golub2013}. Thereby, Proposition~\ref{prop:change-of-basis} applies. Let $\Omega R_2\Omega=U\Sigma U\trnsp$ be an eigendecomposition, where $\Sigma\in\pdsetm$ is diagonal. By construction, $U\trnsp \Omega R_2 \Omega\trnsp U$ is diagonal. Hence, with $\Mopt=U\trnsp \Omega$, the covariance $\RM=\Mopt R_2\Moptt$ is diagonal. If $\Mopt=U\trnsp \Omega$, then since $U\trnsp$ is orthogonal and $\Omega$ has orthonormal rows
\begin{equation*}
	\Mopt\Moptt = U\trnsp \Omega\Omega\trnsp U = U\trnsp IU = I,
\end{equation*}
\ie, $\Mopt$ has orthonormal rows. 

%

\subsubsection{Proposed Framework}

The proposed \emph{generalized eigenvalue optimization} (\abbrGEVO) methodology for deriving $\Mopt$ is given in Algorithm~\ref{alg:gevo}. It is a direct application of Theorem~\ref{thm:gevo:solution} followed by the previously discussed change of basis procedure. Step~4 ensures that $\Mopt R_2\Moptt$ is diagonal. 

\begin{algorithm}[tb]
	\caption{\abbrGEVO}
	\label{alg:gevo}
	\begin{footnotesize}
	\begin{algorithmic}[0]
		\Input $R_1\in\pdsetnx$, $R_2\in\pdsetnb$, $R_{12}\in\reals^{\nx\times\nb}$, $H_2\in\reals^{\nb\times\nx}$, and $m$ 
		\begin{enumerate}
			\item Let $Q = (R_1H_2\trnsp-R_{12})\trnsp(R_1H_2\trnsp-R_{12})$ and $S = H_2R_1H_2\trnsp+R_2-H_2R_{12}-R_{21}H_2\trnsp$. 
			\item Compute $\lambda_1,\dots,\lambda_{\nb}$ and $u_1,\dots,u_{\nb}$ by solving $Qu = \lambda Su$.
			\item Define $\Phi=\col(u_1\trnsp,\dots,u_m\trnsp)$, and compute $\Omega=\col(v_1\trnsp,\dots,v_m\trnsp)$ such that $v_i\trnsp v_j=\delta_{ij}$ and $\rowspan(\Omega)=\rowspan(\Phi)$.
			\item Compute $\Omega R_2\Omega\trnsp = U\Sigma U\trnsp$ and let $\Mopt=U\trnsp\Omega$.
		\end{enumerate}
		\Output $\Mopt$
	\end{algorithmic}
	\end{footnotesize}
\end{algorithm}

%

\begin{rmk}
Optimality guarantees are developed under the stated model assumptions. If the model assumptions are incorrect, then it is generally impossible to make any definite claims. However, from matrix perturbation theory, we have that the eigenvectors of a slightly, in a relative sense, perturbed eigenvalue problem are approximately equal to the unperturbed case \cite{Stewart1990}.
\end{rmk}

\begin{rmk}
The derivation of the \abbrGEVO method demonstrates that, by transforming the original problem, an \abbrEVP can equivalently be solved for the derivation of $\Mopt$. However, the \abbrGEVP formulation is kept for numerical considerations \cite{Parlett1998}. 
\end{rmk}

\subsubsection{Singular Case}

If $S\succeq0$ is singular, then $\M S\Mt$ might be singular. If so, the inverse in \eqref{eq:gevo:P-to-minimize-2} is replaced by a pseudoinverse such that
\begin{equation}
	P = R_1 - \Delta\Mt(\M S\Mt)\pinv\M\Delta\trnsp,
	\label{eq:gevo:singular:P-to-minimize}
\end{equation}
where $\Delta=R_1H_2\trnsp-R_{12}$. Assume that $1\leq m\leq \rank(S)=r$. Let $S=VDV\trnsp$, where $D=\diag(D_1,0)$, $D_1\in\pdset^{r}$, and $0$ is an $(\nb-r)\times(\nb-r)$ matrix of zeros, such that
\begin{equation}
	S\pinv = VD\pinv V\trnsp = V\diag(D_1\inv,0)V\trnsp.
\end{equation}
The case $m>r$ does not make sense since this does not improve $P$ in \eqref{eq:gevo:singular:P-to-minimize} for the same reason that $m>\nb$ does not make sense if $S\in\pdsetnb$. This is stated formally in Proposition~\ref{prop:gevo:singular-case}. 

\begin{prop} \label{prop:gevo:singular-case}
Assume $\rank(S)=\rank(\M_1)=r$, where $S\in\psdsetnb$ and $\M_1\in\reals^{r\times\nb}$. Let $\M_1S\Mt_1\in\pdset^r$. If $\M=\col(\M_1,\M_2)$, where $\M_2\in\reals^{r\times\nb}$, then
\begin{equation}
	\Mt(\M S\Mt)\M = \Mt_1(\M_1 S\Mt_1)\M_1.
\end{equation}
\end{prop}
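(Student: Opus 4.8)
The claim is that adjoining the rows $\M_2$ changes nothing: because $\rank(\M_1)=\rank(S)=r$ and $\M_1S\Mt_1$ is nonsingular, $\M_1$ already spans the ``useful'' directions of $S$, so $\Mt(\M S\Mt)\pinv\M$ (the pseudoinverse understood, as in \eqref{eq:gevo:singular:P-to-minimize}) equals $\Mt_1(\M_1S\Mt_1)\inv\M_1$. I would establish this by factoring $S$, observing that both $\M S\Mt$ and $\M_1S\Mt_1$ then become Gram matrices of \emph{full-column-rank} matrices, and applying one short pseudoinverse identity.

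\emph{Factorization.} With $S=VDV\trnsp$, $D=\diag(D_1,0)$, $D_1\in\pdset^r$ as introduced above, write $V=[V_1\ \ V_2]$ with $V_1\in\reals^{\nb\times r}$ and put $L=V_1D_1^{1/2}\in\reals^{\nb\times r}$, so that $S=LL\trnsp$ and $L$ has full column rank $r$. Then $\M_1S\Mt_1=(\M_1L)(\M_1L)\trnsp$ is positive definite with $\M_1L\in\reals^{r\times r}$, so $\M_1L$ is invertible; and since $\M L=\col(\M_1L,\M_2L)$ contains the invertible block $\M_1L$, the matrix $\M L\in\reals^{2r\times r}$ has full column rank $r$, with $\M S\Mt=(\M L)(\M L)\trnsp$.

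\emph{Core identity and reduction onto the range of $S$.} For any $C$ of full column rank, $(CC\trnsp)\pinv=C(C\trnsp C)^{-2}C\trnsp$, hence $C\trnsp(CC\trnsp)\pinv C=(C\trnsp C)(C\trnsp C)^{-2}(C\trnsp C)=I$. Taking $C=\M L$ gives $(\M L)\trnsp(\M S\Mt)\pinv(\M L)=I_r$, and taking $C=\M_1L$ (square and invertible) gives $(\M_1L)\trnsp(\M_1S\Mt_1)\inv(\M_1L)=I_r$. Therefore $L\trnsp\bigl[\Mt(\M S\Mt)\pinv\M\bigr]L=L\trnsp\bigl[\Mt_1(\M_1S\Mt_1)\inv\M_1\bigr]L$, i.e.\ the two matrices in the statement coincide once compressed onto the column space of $L$, which is the range of $S$.

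\emph{The crux.} What remains — and what I expect to be the only real difficulty — is to lift this ``$L\trnsp(\cdot)L$'' equality to the full matrix identity, since the two sides can genuinely differ on $\ker S$. For the use the proposition is put to, this gap is immaterial: in \eqref{eq:gevo:singular:P-to-minimize} the bracketed matrix occurs only as $\Delta\,[\,\cdot\,]\,\Delta\trnsp$ with $\Delta=R_1H_2\trnsp-R_{12}$, and $\rowspan(\Delta)\subseteq\rowspan(S)$ — which, $S$ being symmetric, is the range of $S$ — and this follows from $\BBSM R_1&R_{12}\\R_{21}&R_2\EBSM\succeq0$ by the same block manipulation that gives $S\succeq0$; writing the columns of $\Delta\trnsp$ as $L\Gamma$, both $\Delta\Mt(\M S\Mt)\pinv\M\Delta\trnsp$ and $\Delta\Mt_1(\M_1S\Mt_1)\inv\M_1\Delta\trnsp$ reduce to $\Gamma\trnsp\Gamma$, so $P$ in \eqref{eq:gevo:singular:P-to-minimize} is unchanged. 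Alternatively, taking the rows of $\M_2$ in $\ker S$ — the natural way to formalize ``rows beyond $\rank S$'' — makes $\M S\Mt=\diag(\M_1S\Mt_1,0)$, and then the matrix identity holds verbatim.
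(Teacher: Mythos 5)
Your computations are correct, and your ``core identity'' $C\trnsp(CC\trnsp)\pinv C=I$ for full-column-rank $C$ is exactly the Greville-type fact the paper's own proof relies on, there applied to $B=\BBM I&A\trnsp&0 \EBM\trnsp$ rather than to your $\M L$. The real difference between the two arguments is where they place the burden. The paper's proof opens by writing $\M_2=\col(\M_a,\M_b)$ with $\M_a=A\M_1$ and $\M_b S=0$, i.e.\ it silently restricts attention to those $\M_2$ whose rows are each either a linear combination of the rows of $\M_1$ or elements of $\ker S$; under that structural restriction the verbatim matrix identity holds and is verified by direct block computation with $(BB\trnsp)\pinv=B(B\trnsp B)^{-2}B\trnsp$. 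You instead keep $\M_2$ arbitrary, prove the identity only after compression by $L$ (equivalently, after conjugation by any matrix whose range lies in the column space of $S$), and then note that this weaker conclusion is all that \eqref{eq:gevo:singular:P-to-minimize} consumes, since $\rowspan(\Delta)\subseteq\rowspan(S)$ whenever the joint covariance $R$ is \abbrPSD. (Both of you also correctly read the displayed equation as carrying the pseudoinverse and inverse that its typeset form omits.)

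Your hesitation at ``the crux'' is well founded, and you should not try to close that gap: for general $\M_2$ the verbatim identity is false. Take $S=\diag(1,0)$, $\M_1=\BBM 1&0 \EBM$, $\M_2=\BBM 1&1 \EBM$, which satisfies every hypothesis of the proposition. Then $\M S\Mt=\BBSM 1&1\\1&1 \EBSM$, $(\M S\Mt)\pinv=\tfrac{1}{4}\BBSM 1&1\\1&1 \EBSM$, and $\Mt(\M S\Mt)\pinv\M=\tfrac{1}{4}\BBSM 4&2\\2&1 \EBSM\neq\BBSM 1&0\\0&0 \EBSM=\Mt_1(\M_1 S\Mt_1)\inv\M_1$, although the two sides do agree once multiplied by $L=e_1$ on both ends, exactly as your argument predicts. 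This $\M_2$ is precisely one that cannot be row-partitioned into an $A\M_1$ block and a $\ker S$ block, so it falls outside the case the paper's proof actually treats. In short, your route proves the statement the paper needs — and the verbatim statement under the additional hypothesis you name at the end — while making explicit a restriction on $\M_2$ that the paper's own proof leaves implicit; as a proof of the proposition exactly as stated, neither argument can succeed, because that statement does not hold in full generality.
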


\begin{proof}
Let $\M_2=\col(\M_a,\M_b)$, where $\M_a$ and $\M_b$ lie in the column space and in the null space of $S$, respectively. Hence, $\M_a=A\M_1$ and $\M_b S=0$. Assume \wolog that $\M_1S\Mt_1=I$. In this case, $\M S\Mt=BB\trnsp$ 
where $B=\BBM I&A\trnsp&0\EBM\trnsp$. By construction, $B$ is full column rank, which implies that $B\pinv=(B\trnsp B)\inv B\trnsp$ with $B\trnsp B=I+A\trnsp A$. Moreover, $(BB\trnsp)\pinv = (B\trnsp)\pinv B\pinv$ \cite{Greville1966SIAM}. Hence
\begin{align*}
	&\Mt(\M S\Mt)\pinv\M \\ 
	&\quad= \Mt (BB\trnsp)\pinv \M = \Mt B(B\trnsp B)\inv(B\trnsp B)\inv B\trnsp \M \\ 
	&\quad= \BBM \M_1\\ A\M_1A \\ \M_b\EBM\trnsp \BBM I\\A\\0\EBM (I+A\trnsp A)^{-2} \BBM I\\ A \\0\EBM\trnsp \BBM \M_1\\A\M_1\\\M_b\EBM \\
	&\quad= \Mt_1 (I+A\trnsp A)(I+A\trnsp A)^{-2}(I+A\trnsp A)\M_1 \\
	&\quad= \Mt_1\M_1 = \Mt_1(\M_1 S\Mt_1)\inv\M_1.
\end{align*}
\end{proof}

Let $\Phi\trnsp = D\supsqrt V\trnsp \Mt$. If $m\leq r$ and $\rank(\M)=m$, then it is still possible to impose the constraint $\M S\Mt=I$ such that $\Phi\Phi\trnsp=I$. Hence, Algorithm~\ref{alg:gevo} is valid even if $S$ is singular, given that $m\leq r$ and $\rank(\M)=m$.

\begin{rmk}
If $m\leq r$, then it might still be that $\M Q\Mt=0$ for all feasible $\M$ since $Q$ might be singular. In this degenerated case, fusion of $(y_1,R_1)$ with $(\yM,\RM)$ yields no improvement compared to using $(y_1,R_1)$ alone.
\end{rmk}

\subsubsection{Choosing $m$}

The parameter $m$ is essentially a design choice. As stated previously, there is no benefit of using $m>\rank(S)=r$, where $r\leq\nb$. In some cases, it might be desirable to choose $m$ adaptively. By construction
\begin{equation}
	\trace\left( \Delta\Mt(\M S\Mt)\inv\M\Delta\trnsp \right) = \trace\left( (\M S\Mt)\inv \M Q\Mt \right) = \sum_{i=1}^m \lambda_i,
\end{equation}
where $\M=\Mopt=\col(u_1\trnsp,\dots,u_m\trnsp)$, $(\lambda_i,u_i)$ is given by $Qu=\lambda Su$, and $\lambda_1\geq\dots\geq\lambda_{\nb}$. Consider now $\M=\Mopt$ as a function of $m$. By defining $\ell_0=\trace(R_1)$ and
\begin{equation}
	\ell_m = \trace(P) = \trace(R_1) - \trace\left( (\M S\Mt)\inv \M Q\Mt \right) = \ell_0 - \sum_{i=1}^m\lambda_i,
\end{equation}
for $m\in\{1,2,\dots,r\}$, it is possible to relate $m$ directly to the fusion gain with $\ell_m$. For instance, $m$ can be chosen adaptively to be the smallest integer $m$ such that
\begin{equation}
	\frac{\ell_0 - \ell_{m}}{\ell_0 - \ell_{r}} = \frac{\sum_{i=1}^m\lambda_i}{\sum_{i=1}^r\lambda_i} \geq \tau, 
\end{equation}
for some threshold $\tau\in[0,1]$.

\subsubsection{Computational Complexity} \label{sec:gevo:computational-complexity}

The \abbrPCA in Algorithm~\ref{alg:pco} can be solved using the \emph{QR algorithm}, which has a computational complexity of $\calO(\nb^3)$ \cite{Trefethen1997NLA}. The \abbrGEVP in Algorithm~\ref{alg:gevo} is conveniently solved using the \emph{QZ algorithm}, which is also $\calO(\nb^3)$ \cite{Moler1973SIAM}. However, \abbrGEVO involves a number of additional steps related to the change of basis. The cost of the Gram-Schmidt procedure is $\calO(\nb^2m)$ \cite{Golub2013} and the cost of the subsequent eigendecomposition is $\calO(m^3)$. 


\subsection{\abbrGEVO for Kalman Fusion} \label{sec:gevo-kf}

The \abbrGEVO method for \abbrKF is automatically retrieved by setting $R_{12}=0$ in Algorithm~\ref{alg:gevo}. This yields
\begin{align*}
	Q &= H_2R_1^2H_2\trnsp, & S &= H_2R_1H_2\trnsp+R_2. 
\end{align*}
\GEVOKF, \ie, \abbrGEVO in the \abbrKF case, is provided in Algorithm~\ref{alg:gevo-kf}. \abbrKF assumes zero cross-correlations and is therefore applicable when some kind of decorrelation procedure is utilized. For instance, by using the methods in \cite{Tian2010Fusion} or \cite{Forsling2023Aero}.

\begin{algorithm}[t]
	\caption{\GEVOKF} 
	\label{alg:gevo-kf}
	\begin{footnotesize}
	\begin{algorithmic}[0]
		\Input $R_1\in\pdsetnx$, $R_2\in\pdsetnb$, $R_{12}\in\reals^{\nx\times\nb}$, $H_2\in\reals^{\nb\times\nx}$, and $m$ 
		\begin{enumerate}
			\item Let $Q = H_2R_1^2H_2\trnsp$ and $S = H_2R_1H_2\trnsp+R_2$. 
			\item Compute $\lambda_1,\dots,\lambda_{\nb}$ and $u_1,\dots,u_{\nb}$ by solving $Qu = \lambda Su$.
			\item Define $\Phi=\col(u_1\trnsp,\dots,u_m\trnsp)$, and compute $\Omega=\col(v_1\trnsp,\dots,v_m\trnsp)$ such that $v_i\trnsp v_j=\delta_{ij}$ and $\rowspan(\Omega)=\rowspan(\Phi)$.
			\item Compute $\Omega R_2\Omega\trnsp = U\Sigma U\trnsp$ and let $\Mopt=U\trnsp\Omega$.
		\end{enumerate}
		\Output $\Mopt$
	\end{algorithmic}
	\end{footnotesize}
\end{algorithm}

\subsection{\abbrGEVO for Covariance Intersection} \label{sec:gevo-ci}

For fusion using \abbrCI, the \abbrGEVO method does not apply directly. The reason is the dependency on $\omega$. In \cite{Forsling2022Fusion}, an iterative algorithm based on \emph{alternating minimization} (\abbrAM, \cite{Beck2017}) is proposed. First, it is noted that \abbrCI only differs from \abbrKF through the $\omega$ parameter. The basic idea is to alternate between keeping $\omega$ and $\Phi$ fixed. A generalization for $m\geq1$ of the original algorithm proposed in \cite{Forsling2022Fusion} is provided in Algorithm~\ref{alg:gevo-ci}. The loss function value $J_k$ is defined in \eqref{eq:gevo-ci:J}.

\begin{algorithm}[t]
	\caption{\GEVOCI}
	\label{alg:gevo-ci}
	\begin{footnotesize}
	\begin{algorithmic}[0]
		\Input $\omega_0$, $R_1\in\pdsetnx$, $R_2\in\pdsetnb$, $H_2\in\realsnbnx$, $m$, $k=0$, and $\epsilon$ \\
		\begin{enumerate}
			\item Let $k\leftarrow k+1$. Compute $\lambda_1,\dots,\lambda_{\nb}$ and $u_1,\dots,u_{\nb}$ by solving $Qu = \lambda Su$, where $Q = H_2R_1^2H_2\trnsp/\omega_{k-1}^2$ and $S = H_2R_1H_2\trnsp/\omega_{k-1}+R_2/(1-\omega_{k-1})$. Let $\Phi_k=\col(u_1\trnsp,\dots,u_{m}\trnsp)$, where $u_i$ is a generalized eigenvector associated with $\lambda_i$. 
			\item Let $R_\Phi=\Phi_k R_2 \Phi_k\trnsp$. Compute $\omega_k$ by solving
			\begin{equation*}
				\underset{\omega}{\minimize}\quad \trace\left(\left(\omega R_1\inv + (1-\omega)H_2\trnsp \Phi_k\trnsp R_\Phi\inv \Phi_k H_2\right)\inv\right).
			\end{equation*}
			\item Let $J_k$ be according to \eqref{eq:gevo-ci:Jk}. If $(J_{k-1}-J_k)/J_k>\epsilon$, then go back to step~1. Otherwise continue to step~4.
			\item Define $\Phi=\col(u_1\trnsp,\dots,u_m\trnsp)$, and compute $\Omega=\col(v_1\trnsp,\dots,v_m\trnsp)$ such that $v_i\trnsp v_j=\delta_{ij}$ and $\rowspan(\Omega)=\rowspan(\Phi)$.
			\item Compute $\Omega R_2\Omega\trnsp = U\Sigma U\trnsp$ and let $\M=U\trnsp\Omega$.
		\end{enumerate}
		\Output $\M$
	\end{algorithmic}
	\end{footnotesize}
\end{algorithm}

The parameter $\epsilon$ in Algorithm~\ref{alg:gevo-ci} is a design parameter chosen as a compromise between computational speed and exactness of the solution given by the final iterate. Here, $\omega_0=1/2$ is used.

\subsubsection{Convergence Analysis}

The \abbrAM method in Algorithm~\ref{alg:gevo-ci} alternates between solving two different kinds of optimization problems. Each separate problem is well-posed, and a global minimum is obtained. The solution to the problem in step~1 is given by the generalized eigenvalues associated with the largest generalized eigenvalues of a \abbrGEVP. Step~2 involves solving a convex optimization problem \cite{Orguner2017SDF} for which any local minimum is also a global minimum. This does, however, not imply that the final iteration of Algorithm~\ref{alg:gevo-ci} is a global minimizer. Below, it is shown that the iterations of Algorithm~\ref{alg:gevo-ci} converge to a stationary point.

Let $J(\omega,\Phi)=\trace(P)$ with $P$ according to \eqref{eq:method:ci-dr}, \ie,
\begin{small}
\begin{align}
	J(\omega,\Phi) 
	&= \trace\left(\left(\omega R_1\inv + (1-\omega)H_2\trnsp\Phi\trnsp(\Phi R_2\Phi\trnsp)\inv\Phi H_2 \right)\inv\right). 
	\label{eq:gevo-ci:J}
\end{align}	
\end{small}
Define
\begin{align}
	J_{k-\frac{1}{2}} &= J(\omega_{k-1},\Phi_k), & J_{k} &= J(\omega_k,\Phi_k). \label{eq:gevo-ci:Jk}
\end{align}
Consider a sequence of $\nk$ iterations and hence $2\nk$ subiterations. Each iteration and subiteration have the same feasible set. Since in each subiteration a minimum is obtained it is concluded that
\begin{equation*}
	J_{\frac{1}{2}} \geq J_1 \geq \cdots \geq J_{N-\frac{1}{2}} \geq J_{N},
\end{equation*}
which is a nonincreasing sequence $\{J_k\}$. Denote by $\{(\omega,\Phi)_k\}$ the sequence of points generating $\{J_k\}$. Since $P\in\pdsetnx$, there exists a lower bound $\Jlow>0$ such that $J(\omega,\Phi)\geq\Jlow,\forall(\omega,\Phi)$. Hence, the \emph{monotonic convergence theorem} \cite{Bibby1974GMJ} is applicable, which states that 
\begin{equation}
	\lim_{k\rightarrow\infty} J_k = J(\omegalim,\Philim),
\end{equation}
where $(\omegalim,\Philim)$ is a limit point. In the limit $J_{k+\frac{1}{2}}-J_{k-\frac{1}{2}}\rightarrow0$ and $J_{k+1}-J_{k}\rightarrow0$. Since $J(\omega,\Phi)$ is differentiable \wrt $\omega$ and $\Phi$ on its domain, this implies that 
\begin{align}
	\left.\frac{\partial}{\partial\omega} J(\omega,\Phi) \right|_{\omega=\omegalim} &= 0, & \left.\frac{\partial}{\partial\Phi} J(\omega,\Phi) \right|_{\Phi=\Philim} &= 0,
\end{align}
and hence $(\omegalim,\Philim)$ is a stationary point. The convergence results are summarized in Theorem~\ref{thm:gevo-ci:convergence}. It should be emphasized that from Theorem~\ref{thm:gevo-ci:convergence} alone, it cannot be concluded if $(\omegalim,\Philim)$ is a local minimizer, global minimizer or a saddle point.

\begin{thm} \label{thm:gevo-ci:convergence}
Let $\{(\omega,\Phi)_k\}$ be a sequence of points generated by Algorithm~\ref{alg:gevo-ci} and let $J(\omega,\Phi)$ be given by \eqref{eq:gevo-ci:J}. Then $\{(\omega,\Phi)_k\}$ converges to a stationary point $(\omegalim,\Philim)$ of $J(\omega,\Phi)$.
\end{thm}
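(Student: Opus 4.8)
The plan is to run the classical monotone-descent argument for two-block alternating minimization. The two blocks are $\Phi$ (updated by the generalized-eigenvector step~1 of Algorithm~\ref{alg:gevo-ci}) and $\omega$ (updated by the one-dimensional convex minimization in step~2); each block subproblem is solved to \emph{global} optimality over a feasible set that does not depend on $k$, and the objective is the common $J(\omega,\Phi)$ of \eqref{eq:gevo-ci:J}. With the half-iterate notation of \eqref{eq:gevo-ci:Jk}, optimality of the $\Phi$-update gives $J_{k-\frac12}=J(\omega_{k-1},\Phi_k)\le J(\omega_{k-1},\Phi_{k-1})=J_{k-1}$ and optimality of the $\omega$-update gives $J_k=J(\omega_k,\Phi_k)\le J(\omega_{k-1},\Phi_k)=J_{k-\frac12}$, so $J_{\frac12}\ge J_1\ge J_{\frac32}\ge J_2\ge\cdots$ is a single nonincreasing real sequence.

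Next I would establish a uniform positive lower bound. Writing $P\inv=\omega R_1\inv+(1-\omega)H_2\trnsp\Phi\trnsp(\Phi R_2\Phi\trnsp)\inv\Phi H_2$ and using $1-\omega\le1$, one has $P\inv\preceq R_1\inv+CI$ with $C=\sup_\Phi\lambdamax\big(H_2\trnsp\Phi\trnsp(\Phi R_2\Phi\trnsp)\inv\Phi H_2\big)$, which is finite because, by Proposition~\ref{prop:change-of-basis} with $S=R_2$, that matrix depends on $\Phi$ only through $\rowspan(\Phi)$, so the supremum is over the compact Grassmannian of $m$-dimensional subspaces of $\realsnb$; hence $P\succeq(R_1\inv+CI)\inv\succ0$ uniformly and $J\ge\trace\big((R_1\inv+CI)\inv\big)=:\Jlow>0$. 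The monotone convergence theorem \cite{Bibby1974GMJ} then gives $J_k\downarrow J^\star$, with all half-iterates sharing the same limit, so the per-iteration gaps $J_{k-\frac12}-J_k\to0$ and $J_k-J_{k+\frac12}\to0$. To exhibit a limit point, I would regard the state as the pair $\big(\omega_k,\rowspan(\Phi_k)\big)$ lying in the product of $(0,1]$ with that Grassmannian — a compact set, modulo standard care at the $\omega\to0$ boundary, which the $J$-descent keeps the iterates away from — extract a convergent subsequence to $(\omegalim,\Philim)$, and use that $J$ descends (again via Proposition~\ref{prop:change-of-basis}) to a continuous function there, so that $J(\omegalim,\Philim)=J^\star$. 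Finally I would combine the vanishing gaps with block-optimality: since $\omega_k$ minimizes $J(\cdot,\Phi_k)$ and $J(\omega_{k-1},\Phi_k)-J(\omega_k,\Phi_k)\to0$, passing to the limit forces $\partial J/\partial\omega=0$ at $(\omegalim,\Philim)$; symmetrically, the $\Phi$-update with $J(\omega_k,\Phi_k)-J(\omega_k,\Phi_{k+1})\to0$ forces $\partial J/\partial\Phi=0$. Here one uses that $J$ is continuously differentiable in $(\omega,\Phi)$ wherever $\Phi R_2\Phi\trnsp\succ0$ (a composition of products, trace, and matrix inversion), so a sequence of first-order block-optimality conditions passes to the limit, giving stationarity of $(\omegalim,\Philim)$.

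The step I expect to be the main obstacle is the last one: turning ``the $J$-gaps vanish and $J$ is differentiable, hence the limit is stationary'' into a rigorous limiting argument. Two points need care. First, $\Phi_k$ is determined only up to a change of basis of its row span, so the compactness, continuity, and limit arguments must be phrased on the Grassmannian rather than on matrices, and Proposition~\ref{prop:change-of-basis} is exactly what makes this legitimate. Second, the block minimizers need not be unique — the $\Phi$-update is ambiguous when the $m$-th and $(m{+}1)$-th generalized eigenvalues of $(Q,S)$ coincide, and the $\omega$-update is ambiguous if its convex objective is not strictly convex — so passing from ``$\Phi_k$ is asymptotically optimal for $J(\omega_k,\cdot)$'' to a gradient equation at the limit requires a closed-graph/continuity property of the block solution maps (or restricting to the generic case where the relevant eigenvalue gap is strict). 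I would discharge this by appealing to the general alternating-minimization convergence theory of \cite{Beck2017} rather than re-deriving it, and I would also flag the mild caveat that this argument really yields ``every convergent subsequence has a stationary limit, all at the common value $J^\star$''; convergence of the whole sequence $\{(\omega,\Phi)_k\}$ to a single point would additionally need, e.g., the stationary points to be isolated.
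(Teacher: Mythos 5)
Your proposal follows essentially the same route as the paper's own argument: monotone descent of the half-iterate sequence $J_{k-\frac{1}{2}},J_k$ obtained from exact block minimization over a fixed feasible set, a positive lower bound $\Jlow$ on $J$, the monotone convergence theorem, and differentiability of $J$ to conclude stationarity of the limit. You are in fact more careful than the paper at the two points it glosses over --- justifying the uniform lower bound and turning vanishing objective gaps into vanishing partial derivatives via block optimality --- and your closing caveat (the argument really yields that every limit point is stationary, rather than convergence of the whole sequence to a single point) applies equally to the paper's proof.
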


The convergence rate is evaluated numerically. In each simulation, $R_1,R_2\in\pdsetnx$ are \iid according to $R_1,R_2\sim \calW(I,\nx)$, where $\calW(I,\nx)$ is the \emph{Wishart distribution} of $\nx$ degrees of freedoms \cite{Wishart1928Biometrika}. If in a certain sample $R_2\succeq R_1$, then $R_1$ is resampled until $R_2\not\succeq R_1$. This resampling is done because $R_2\succeq R_1$ would trivially yield $\omega=1$ in Algorithm~\ref{alg:gevo-ci} for all feasible $\M$.

The evaluation is performed with $\nx\in\{6,9\}$ and $\epsilon\in\{0.1\%,0.01\%\}$. In each case, and for each $m$, 1\,000\,000 \abbrMC simulations are conducted. The evaluation measure is the number of iterations until $|J_{k-1}-J_k|/J_k\leq\epsilon$. The results are shown in Fig.~\ref{fig:dr:gevo-ci:cra}. The statistics are summarized in Table~\ref{tab:gevoci:convergence-analysis}, where \emph{typ}, \emph{mean}, and \emph{std} refer to the typical value, the mean, and the standard deviation, respectively. The \GEVOCI algorithm converges very fast in general, \eg, for $\nx=9$ and $\epsilon=0.01\%$ with $m=3$ the mean number of iterations is approximately $4.056\pm0.634$ and the typical value is 4. The convergence rate is improved by increasing $m$. 

The number of iterations increases as $\nx$ increases and as $\epsilon$ decreases. Hence, it cannot be expected that the computational cost of \GEVOCI scales as $\calO(\nb^3)$. In addition, \GEVOCI involves a convex optimization problem, which further increases the computational cost. This can however be improved using, \eg, the approximation proposed in \cite{Orguner2017SDF}.

\begin{figure}[tb]
	\centering
	\begin{tikzpicture}[xscale=.16,yscale=.5]
		\input{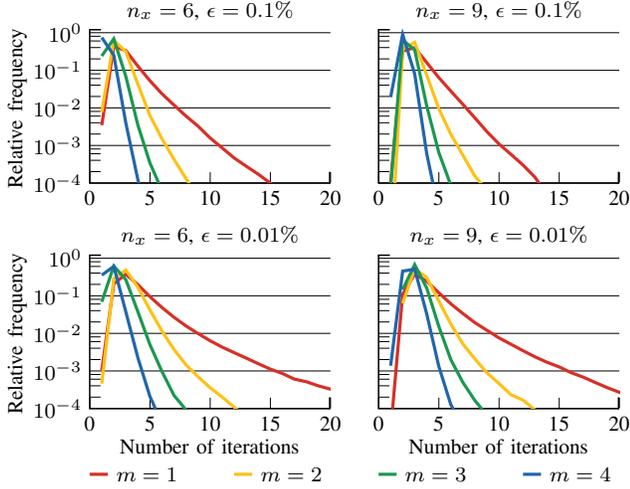}
	\end{tikzpicture}
	\caption{Results of the numerical convergence rate analysis, where $\nx\in\{6,9\}$ and $\epsilon\in\{0.1\%,0.01\%\}$. The plots show the relative frequency of the number of iterations until the criterion $(J_{k-1}-J_k)/J_k\leq\epsilon$ in Algorithm~\ref{alg:gevo-ci} is met.}
	\label{fig:dr:gevo-ci:cra}
\end{figure}

\begin{table}[htb] 
\caption{Convergence Rate Analysis} %
\centering
\label{tab:gevoci:convergence-analysis}
\begin{footnotesize}
\begin{minipage}{.49\columnwidth}
\subcaption*{$\nx=6$, $\epsilon=0.1\%$}
\begin{tabular}{cccc}
\toprule\midrule
$m$ & \textbf{typical} & \textbf{mean} & \textbf{std} \\
\midrule
1 & 3 & 3.995 & 1.311 \\ 
2 & 3 & 3.421 & 0.645 \\ 
3 & 3 & 2.836 & 0.549 \\ 
4 & 2 & 2.260 & 0.449 \\ 
\bottomrule
\end{tabular}
\vspace{0.5em}
\subcaption*{$\nx=6$, $\epsilon=0.01\%$}
\vspace{0.5em}
\begin{tabular}{cccc}
\toprule\midrule
$m$ & \textbf{typical} & \textbf{mean} & \textbf{std} \\
\midrule
1 & 4 & 4.795 & 2.099 \\ 
2 & 4 & 4.035 & 0.961 \\ 
3 & 3 & 3.298 & 0.691 \\ 
4 & 3 & 2.687 & 0.554 \\ 
\bottomrule
\end{tabular}
\end{minipage} %
\begin{minipage}{.49\columnwidth}
\subcaption*{$\nx=9$, $\epsilon=0.1\%$}
\begin{tabular}{cccc}
\toprule\midrule
$m$ & \textbf{typical} & \textbf{mean} & \textbf{std} \\
\midrule
1 & 4 & 4.156 & 1.241 \\ 
2 & 4 & 3.744 & 0.663 \\ 
3 & 3 & 3.376 & 0.513 \\ 
4 & 3 & 3.062 & 0.313 \\ 
\bottomrule
\end{tabular}
\vspace{0.5em}
\subcaption*{$\nx=9$, $\epsilon=0.01\%$}
\vspace{0.5em}
\begin{tabular}{cccc}
\toprule\midrule
$m$ & \textbf{typical} & \textbf{mean} & \textbf{std} \\
\midrule
1 & 4 & 5.108 & 2.059 \\ 
2 & 4 & 4.487 & 0.914 \\ 
3 & 4 & 4.056 & 0.634 \\ 
4 & 4 & 3.582 & 0.568 \\  
\bottomrule
\end{tabular}
\end{minipage}
\end{footnotesize}
\end{table}

\subsection{\abbrGEVO for the Largest Ellipsoid Method} \label{sec:gevo-le}

\begin{algorithm}[t]
	\caption{\GEVOLE}
	\label{alg:gevo-le}
	\begin{footnotesize}
	\begin{algorithmic}[0]
		\Input $R_1\in\pdsetnx$, $R_2\in\pdsetnb$, $H_2\in\realsnbnx$, and $m$ \\
		\begin{enumerate}
			\item Transform into the information domain
			\begin{align*}
				\calI_1 &= R_1\inv, & \calI_2 &= H_2\trnsp R_2\inv H_2.
			\end{align*}
			\item Factorize $\calI_1=U_1\Sigma_1 U_1\trnsp$ and let $T_1=\Sigma_1^{-\frac{1}{2}} U_1\trnsp$. Factorize $T_1\calI_2T_1\trnsp=U_2\Sigma_2U_2\trnsp$ and let $T_2=U_2\trnsp$. Transform using $T=T_2T_1$ according to
			\begin{align*}
				\calI_1' &= T\calI_1T\trnsp = I, & \calI_2' &= T\calI_2T\trnsp.
			\end{align*}
			\item Let $D$ be diagonal. For each $i=1,\dots,\nx$ compute
			\begin{equation*}
				[D]_{ii} = \min\left( 1, [\calI_2']_{ii}  \right).
			\end{equation*}
			\item Let $\calI_\gamma = T\inv DT\invtrnsp$ and $R_{12} = R_1\calI_\gamma H_2\trnsp R_2$. Compute $\M$ using Algorithm~\ref{alg:gevo} with inputs $R_1$, $R_2$, $R_{12}$, $H_2$, and $m$.
		\end{enumerate}
		\Output $\M$
	\end{algorithmic}
	\end{footnotesize}
\end{algorithm}

To be able to use the \abbrGEVO framework with the \abbrLE method, some adaptations are required. As discussed in \cite{Noack2016MFI}, the \abbrLE method makes an implicit assumption about $R_{12}$ if $H_2=I$. This concept can be generalized by assuming the \emph{common information decomposition}\footnote{This decomposition is a generalization of the one utilized in, \eg, \cite{Noack2016MFI}.}
\begin{subequations} \label{eq:common-information-decomposition}
\begin{align}
  	R_1\inv &= \calI_1^e + \calI_\gamma, &
	R_1\inv y_1 &= \calI_1^e y_1^e + \calI_\gamma\gammahat, \\
	R_2\inv &= \calI_2^e + H_2\calI_\gamma H_2\trnsp, &
	R_2\inv y_2 &= \calI_2^e y_2^e + H_2\calI_\gamma\gammahat.
\end{align}	
\end{subequations}
where $\cov(y_1^e,y_2^e)=\cov(y_1^e,\gammahat)=0$ and $\cov(y_2^e,\gammahat)=0$. Define $=R_i^e=\cov(\ytilde_i^e)$ and $\Gamma=\cov(\gammatilde)$. Allow for the \emph{exclusive information} $\calI_i^e$ and the \emph{common information} $\calI_\gamma$ to be singular, but still assume $R_1\in\pdsetnx$ and $R_2\in\pdsetnb$. Assume that $\calI_\gamma$ has $p\leq\nx$ zero eigenvalues, \ie, $\Gamma=\cov(\gammahat)$ is infinite in $p$ components\footnote{Infinite variance corresponds to zero information.}. Hence, the issue of infinite variance is conveniently handled in the information domain. The information $\calI_\gamma$ is given by
\begin{align}
	\calI_\gamma &= VDV\trnsp, & D &=
	\diag(d_1,\dots,d_{\nx-p},0,\dots,0)\in\psdsetnx,
\end{align}
where $d_i>0$ and $V=\BBM v_1&\dots&v_{\nx}\EBM$ is an orthogonal matrix. In this configuration, the components with infinite variance correspond to the directions $v_{\nx-p+1},\dots,v_{\nx}$. Let $\Phi=\col(v_1\trnsp,\dots,v_{\nx-p}\trnsp)$ such that $\calI_\gamma=\Phi\trnsp(\Phi\Gamma\Phi\trnsp)\inv\Phi$, where $\Phi\Gamma\Phi\trnsp$ is finite and invertible by construction. It follows that
\begin{align}
	\calI_\gamma\Gamma\calI_\gamma 
	&= \Phi\trnsp(\Phi\Gamma\Phi\trnsp)\inv\Phi\Gamma \Phi\trnsp(\Phi\Gamma\Phi\trnsp)\inv\Phi
	= \Phi\trnsp(\Phi\Gamma\Phi\trnsp)\inv\Phi \nonumber \\
	&= \calI_\gamma.
	\label{eq:limit:IGI}
\end{align}
Let $\ytilde_i=y_i-\EV(y_i)$, $\ytilde_i^e=y_i^e-\EV(y_i^e)$, and $\gammatilde=\gammahat-\EV(\gammahat)$. Using \eqref{eq:common-information-decomposition} and \eqref{eq:limit:IGI} yield
\begin{align}
	R_{12} \nonumber 
	&= \EV\left(\left(R_1\left(\calI_1^e y_1^e + \calI_\gamma\gammahat\right) - x\right) \vphantom{H\trnsp} \right. \\ 
	&\quad\times \left.\left(R_2\left(\calI_2^e y_2^e + H_2\calI_\gamma\gammahat\right) - H_2x\right)\trnsp \right) \nonumber \\
	&= R_1\EV\left( \left(\calI_1^e \ytilde_1^e + \calI_\gamma\gammatilde\right) \left(\calI_2^e \ytilde_2^e + H_2\calI_\gamma\gammatilde\right)\trnsp \right)R_2 \nonumber \\  
	&= R_1\calI_\gamma \EV(\gammatilde\gammatilde\trnsp)  \calI_\gamma H_2\trnsp R_2 = R_1\calI_\gamma\Gamma\calI_\gamma H_2\trnsp R_2 \nonumber \\ 
	&= R_1\calI_\gamma H_2\trnsp R_2. \label{eq:le:R12}
\end{align}

The \abbrGEVO method for \abbrLE is provided in Algorithm~\ref{alg:gevo-le}. The key step is to compute the common information $\calI_\gamma$ as above. This corresponds to steps~1--3 in Algorithm~\ref{alg:gevo-le}. Then $R_{12}$ is given by \eqref{eq:le:R12}. When $R_{12}$ has been obtained, Algorithm~\ref{alg:gevo} is applicable. If $\rank(H_2)=\nx$, then both $\calI_1$ and $\calI_2$ in Algorithm~\ref{alg:gevo-le} are full rank. Hence, also $\calI_\gamma\inv=\Gamma$ is full rank. 

In \cite{Forsling2022TSP} it is noted that if $H_2=I$, then \abbrLE corresponds to a particular $R_{12}$. As a consequence, $\M$ computed by Algorithm~\ref{alg:gevo} is guaranteed to be optimal given that \abbrLE is used to compute $(\xhat,P)$. That is, $\M$ computed by Algorithm~\ref{alg:gevo-le} is optimal \wrt $\trace(P)$, if $H_2=I$ and $P$ is computed using \eqref{eq:method:le-dr:estimate}. The \abbrLE method is an optimal conservative estimation when the following conditions hold: (i) $H_2=I$; and (ii) $R_{12}$ is according to the \emph{componentwise aligned correlations} structure defined in \cite{Forsling2022TSP}. This means that $\M$ computed by Algorithm~\ref{alg:gevo-le} implies an optimal conservatively fused estimate if (i) and (ii) hold simultaneously.

Algorithm~\ref{alg:gevo-le} involves several additional steps for computation of $R_{12}$, compared to Algorithm~\ref{alg:gevo}. However, since these operations are also EVPs, the computational cost of \GEVOLE scales as $\calO(\max(\nx^3,\nb^3))$.

\subsection{Theoretical Comparison of \abbrGEVO and \abbrPCO}

The \abbrGEVO method in Algorithm~\ref{alg:gevo} computes $\Mopt$, which is optimal \wrt $\trace(P)$. In certain cases, $\Mopt$ is equal to $\Mpco$ computed by the \abbrPCO method in Algorithm~\ref{alg:pco}, but in general, $\Mpco\neq\Mopt$. The next example illustrates a case when $\Mpco=\Mopt$ and a case when $\Mpco$ is as far as possible from $\Mopt$.

Assume that $\nb=\nx=2$, $H_2=I$, and $R_{12}=0$. Let $R_2=\diag(4,1)$ and $R_1$ be defined either as: (i) $R_1=\diag(1,4)$; or (ii) $R_2=\diag(4,1)$. Let $\Mgevo$ be computed using Algorithm~\ref{alg:gevo-kf} and $\Mpco$ be computed using Algorithm~\ref{alg:pco}. Let $\umin$ and $\umax$ be the eigenvectors associated with the minimum and maximum eigenvalues $\lambdamin(R_2)$ and $\lambdamax(R_2)$, respectively. In case (i),
\begin{align*}
	\Mgevo &= \BBM0&1\EBM = \umin\trnsp, & \Mpco &= \BBM0&1\EBM = \umin\trnsp,
\end{align*}
such that $\Mpco=\Mgevo$. In case (ii)
\begin{align*}
	\Mgevo &= \BBM1&0\EBM = \umax\trnsp, & \Mpco &= \BBM0&1\EBM = \umin\trnsp,
\end{align*}
such that $\Mgevo\Mpco\trnsp=0$. That is, \abbrPCO yields $\Mpco=\umin\trnsp$, but to minimize $\trace(P)$ the $\umax\trnsp$ should be used. As demonstrated below, in case (ii), \abbrPCO yields the worst possible choice of $\M$ \wrt to $\trace(P)$.

The observation made in the previous example is a special case of a more general result. Assume that $H_2=I$ and $R_{12}=0$, such that $(y_1,R_1)$ and $(\yM,\RM)$ are fused optimally using \eqref{eq:method:kf-dr}. Let $R_1,R_2\in\pdsetnx$ be given by
\begin{subequations} \label{eq:dr:shared-ev}
\begin{align}
	R_1 &= V\Sigma V\trnsp, & 
	\Sigma &= \diag(\mu_1,\dots,\mu_{\nx}), \\
	R_2 &= V\Pi V\trnsp, & 
	\Pi &= \diag(\pi_1,\dots,\pi_{\nx}),
\end{align} 	
\end{subequations}
where $V=\BBM v_1&\dots&v_{\nx} \EBM$ is an orthogonal matrix, $\mu_1\geq\dots\geq\mu_{\nx}$ are the eigenvalues of $R_1$, and $\pi_1\geq\dots\geq\pi_{\nx}$ are the eigenvalues of $R_2$. That is, $R_1$ and $R_2$ share eigenvectors, and their eigenvalues are ordered correspondingly. In this case
\begin{align*}
	Q &= R_1^2 = V\Sigma^2V\trnsp, &
	S &= R_1+ R_2 = V(\Sigma+\Pi)V\trnsp.
\end{align*}
Since $S\in\pdsetnx$, the \abbrGEVP $Qu=\lambda Su$ solved in \abbrGEVO can be expressed as
\begin{equation}
	S\inv Q u = V(\Sigma+\Pi)\inv V\trnsp V \Sigma^2 V = V\Lambda V\trnsp u = \lambda u,
	\label{eq:dr:comparison:gevp-to-evp}
\end{equation}
where $\Lambda=\diag(\lambda_1,\dots,\lambda_{\nx})$ and $\lambda_i=\mu_i^2/(\mu_i+\pi_i)$. The assumed ordering of $\mu_i$ and $\pi_i$ implies that 
\begin{equation*}
	\frac{1}{\mu_1+\pi_1} \leq \dots \leq \frac{1}{\mu_{\nb}+\pi_{\nx}},
\end{equation*}
and hence $\lambda_1\geq\dots\geq\lambda_{\nx}$. Since $R_{12}=0$, Algorithm~\ref{alg:gevo-kf} is applicable, where the output $\Mopt$ minimizes $\trace(P)$ with $P$ given in \eqref{eq:method:kf-dr}. In this case $\Mopt = \col(v_1\trnsp,\dots,v_m\trnsp)$. Meanwhile, using Algorithm~\ref{alg:pco} yields $\Mpco = \col(v_{\nx-m+1}\trnsp,\dots,v_{\nx}\trnsp)$. However, by inspection of Algorithm~\ref{alg:gevo-kf} it can be seen that this $\Mpco$ in fact minimizes $\trace((\M S\Mt)\inv \M Q\Mt)$ and hence maximizes $\trace(P)$ among all feasible $\M\in\realsmnb$. Under the given assumptions, it is concluded that \abbrPCO provides the worst possible $\M$ if the goal is to minimize $\trace(P)$\footnote{The \abbrPCO method maximizes the \abbrMSE in this case.}. These results are summarized in Theorem~\ref{thm:pco-worst}.

\begin{thm} \label{thm:pco-worst}
Assume that $H_2=I$ and $R_{12}=0$, and that $R_1,R_2\in\pdsetnx$ are given according to \eqref{eq:dr:shared-ev}, where $\mu_1\geq\dots\geq\mu_{\nx}$ and $\pi_1\geq\dots\geq\pi_{\nx}$. Let $\M\in\realsmnx$ and $P$ be given according to \eqref{eq:method:kf-dr}. Then, $\Mpco$ computed by the \abbrPCO method in Algorithm~\ref{alg:pco} solves
\begin{equation}
	\begin{aligned}
		& \underset{\M}{\maximize} & & \trace(P) \\
		& \subjectto & & \M\M\trnsp = I.
	\end{aligned}	
\end{equation}
\end{thm}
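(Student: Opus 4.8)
The plan is to follow the same route as the derivation of the \abbrGEVO method, but with the eigenvalue \emph{maximum} principle replaced by its \emph{minimum} counterpart, and then to check that the resulting minimizing (worst) subspace is exactly the one produced by \abbrPCO. First I would reduce the objective: with $H_2=I$ and $R_{12}=0$ the Kalman fuser \eqref{eq:method:kf-dr} gives $\SM=\M R_1\Mt+\RM=\M(R_1+R_2)\Mt=\M S\Mt$ and $\KM=R_1\Mt(\M S\Mt)\inv$, so $P=R_1-R_1\Mt(\M S\Mt)\inv\M R_1$; using the cyclic property of the trace and $Q=R_1^2$,
\begin{equation*}
	\trace(P)=\trace(R_1)-\trace\left((\M S\Mt)\inv\M Q\Mt\right),
\end{equation*}
so maximizing $\trace(P)$ subject to $\M\Mt=I$ is equivalent to \emph{minimizing} $g(\M)=\trace((\M S\Mt)\inv\M Q\Mt)$ subject to $\M\Mt=I$.

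Next I would remove the orthonormality constraint: by Proposition~\ref{prop:change-of-basis}, $g(\M)=\trace(Q\,\Mt(\M S\Mt)\inv\M)$ depends on $\M$ only through $\rowspan(\M)$, and since every $m$-dimensional subspace admits an orthonormal-row basis, minimizing $g$ over $\M\Mt=I$ is the same as minimizing $g$ over all rank-$m$ matrices in $\realsmnx$. Then I would run the change of variables from \eqref{eq:opt:dr:gevo-max-3}: write $S=LL\trnsp$ with $L$ invertible, put $\Phi=\M L$, assume \wolog $\M S\Mt=\Phi\Phi\trnsp=I$, and obtain $g(\M)=\trace(\Phi A\Phi\trnsp)$ with $A=L\inv QL\invtrnsp\in\psdsetnx$. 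Where the \abbrGEVO derivation maximizes $\trace(\Phi A\Phi\trnsp)$ over $\Phi\Phi\trnsp=I$, here I would invoke the dual (minimum) principle — equivalently, the maximum principle cited after \eqref{eq:opt:dr:gevo-max-3} applied to $-A$ — which says the minimum equals the sum of the $m$ \emph{smallest} eigenvalues of $A$ and is attained when the rows of $\Phi$ are the corresponding eigenvectors. Transforming back via $\Phi=\M L$, the minimizer of $g$, hence the maximizer of $\trace(P)$, is the matrix whose rows are the generalized eigenvectors of $Qu=\lambda Su$ associated with the $m$ smallest generalized eigenvalues.

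Finally I would identify this minimizer with $\Mpco$. Under \eqref{eq:dr:shared-ev}, substituting $w=V\trnsp u$ turns $Qu=\lambda Su$ into $\Sigma^2 w=\lambda(\Sigma+\Pi)w$, so the generalized eigenvectors are $v_1,\dots,v_{\nx}$ with eigenvalues $\lambda_i=\mu_i^2/(\mu_i+\pi_i)$, which satisfy $\lambda_1\geq\dots\geq\lambda_{\nx}$ as recorded in the text preceding the theorem. Hence the $m$ smallest generalized eigenvalues belong to $v_{\nx-m+1},\dots,v_{\nx}$, so the maximizer of $\trace(P)$ is $\col(v_{\nx-m+1}\trnsp,\dots,v_{\nx}\trnsp)$. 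Since $R_2=V\Pi V\trnsp$ with $\pi_1\geq\dots\geq\pi_{\nx}$, Algorithm~\ref{alg:pco} returns exactly this matrix, so $\Mpco$ solves the stated maximization.

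I expect the delicate points to be: (i) stating the eigenvalue \emph{minimum} principle for a rectangular $\Phi$ with the constraint $\Phi\Phi\trnsp=I$, and verifying the optimum is attained rather than merely bounded below; and (ii) the bookkeeping in the last step that links the ordering of the generalized eigenvalues $\lambda_i$ to the ordering of the $R_2$-eigenvalues $\pi_i$ — this is precisely what makes the \abbrPCO subspace the \emph{worst} admissible one rather than just some admissible one, and it is where the monotonicity assumptions on $\mu_i$ and $\pi_i$ are consumed.
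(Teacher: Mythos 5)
Your route is essentially the paper's: reduce to $\trace(P)=\trace(R_1)-\trace\left((\M S\Mt)\inv\M Q\Mt\right)$ with $Q=R_1^2$ and $S=R_1+R_2$, diagonalize the \abbrGEVP in the shared eigenbasis $V$ so that the generalized eigenpairs are $(\lambda_i,v_i)$ with $\lambda_i=\mu_i^2/(\mu_i+\pi_i)$, and identify $\Mpco=\col(v_{\nx-m+1}\trnsp,\dots,v_{\nx}\trnsp)$ with the eigenvectors of the $m$ smallest generalized eigenvalues. You are in fact more careful than the paper at the closing step: where the text disposes of the maximization ``by inspection of Algorithm~\ref{alg:gevo-kf}'', you invoke the trace-minimum principle for $\trace(\Phi A\Phi\trnsp)$ under $\Phi\Phi\trnsp=I$ and use Proposition~\ref{prop:change-of-basis} to reconcile the theorem's constraint $\M\Mt=I$ with the normalization $\M S\Mt=I$ used in the \abbrGEVO derivation. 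Those additions are correct and strengthen the argument.

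The one step you outsource to the paper --- ``$\lambda_1\geq\dots\geq\lambda_{\nx}$ as recorded in the text preceding the theorem'' --- is, however, precisely the step that does not follow from the stated hypotheses, and you correctly flagged it as the delicate point. The paper deduces the ordering from $\tfrac{1}{\mu_1+\pi_1}\leq\dots\leq\tfrac{1}{\mu_{\nx}+\pi_{\nx}}$, but $\lambda_i=\mu_i^2\cdot\tfrac{1}{\mu_i+\pi_i}$ is the product of a non-increasing and a non-decreasing sequence, which need not be monotone. Concretely, take $\nx=2$, $V=I$, $\mu=(1.1,\,1)$, $\pi=(1000,\,0.001)$ (both orderings in \eqref{eq:dr:shared-ev} hold): then $\lambda_1\approx 0.0012<\lambda_2\approx 0.999$, so for $m=1$ the smallest generalized eigenvalue belongs to $v_1$, not $v_2$, while $\Mpco=v_2\trnsp$ coincides with the \abbrGEVO \emph{minimizer} of $\trace(P)$ rather than the maximizer ($\trace(P)\approx 1.10$ at $v_2\trnsp$ versus $\approx 2.10$ at $v_1\trnsp$). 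The identification ``\abbrPCO subspace $=$ smallest generalized eigenvalues'' therefore needs an additional hypothesis, e.g.\ that $\pi_i/\mu_i$ is non-decreasing in $i$, which makes $\lambda_i=\mu_i/(1+\pi_i/\mu_i)$ non-increasing; without something of this kind neither your argument nor the paper's closes, so do not treat that ordering as already established.
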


In practice, it is not likely to have exactly the conditions assumed in Theorem~\ref{thm:pco-worst}. One relevant question is what can be expected when the conditions hold approximately. Consider $A$ and $B=A+\Delta$. From matrix perturbation theory it is known that if the eigenvalues of $A$ are distinct, then for small $\Delta$ such that $A\approx B$, the eigenvalues and eigenvectors $A$ and $B$ are approximately equal \cite{Stewart1990}. This can be utilized as follows. Assume
\begin{align*}
	R_1 &= V\Sigma V\trnsp + \Delta, &
	R_2 &= V\Pi V\trnsp - \Delta,
\end{align*}
where $V$, $\Sigma$, and $\Pi$ are defined as in \eqref{eq:dr:shared-ev}. Then
\begin{align*}
	Q &= V\Sigma V\trnsp + \Delta V\Sigma V\trnsp + V\Sigma V\trnsp \Delta + \Delta^2, &
	S &= V(\Sigma + \Pi)V\trnsp. 
\end{align*}
The matrix $\Lambda$ of the \abbrEVP defined in \eqref{eq:dr:comparison:gevp-to-evp} is now given by
\begin{align*}
	\Lambda_0 + V(\Sigma+\Pi)\inv V\trnsp\Delta V\Sigma V\trnsp + V(\Sigma+\Pi)\inv\Sigma V\trnsp\Delta + \Delta^2,
\end{align*}
where $\Lambda_0=V(\Sigma+\Pi)\inv\Sigma^2V\trnsp$. Assume that $\Delta$ is such that $\Lambda\approx\Lambda_0$ and $R_2\approx V\Pi V\trnsp$. If, in addition, the eigenvalues of each of $\Lambda_0$ and $V\Pi V\trnsp$ are distinct, then the following apply: (i) \abbrPCO with $R_2$ yields approximately the same solution as \abbrPCO with $V\Sigma V\trnsp$; and (ii) \abbrGEVO with $\Lambda$ yields approximately the same solution as \abbrGEVO with $\Lambda_0$. If so, Theorem~\ref{thm:pco-worst} is expected to be approximately true.

\begin{rmk}
In the previous reasoning, it is not stated what a small $\Delta$ quantitatively means. However, the point here is in a qualitative sense. That is, if \eqref{eq:dr:shared-ev} holds approximately, then \abbrPCO likely is a bad choice for \abbrDR. A \naive application of \abbrPCO hence might imply poor performance.	
\end{rmk}


\section{Communication Considerations} \label{sec:communication}

For Agent~1 to use the received \abbrDR estimate, it is required that Agent~2 also transmit $\M$. In this section, we describe how a message containing $(yM,RM,M)$ should be encoded for maximum communication reduction with the \abbrGEVO method. This message coding is inspired by the previous work in \cite{Forsling2020Fusion}. A quantitative analysis of the communication reduction is provided at the end.

\subsection{Message Coding} \label{sec:dr:message-coding}

Let $\yM\in\realsm$ and assume that $\RM\in\pdsetm$ is diagonal. Suppose now that Agent~2 is transmitting $(\yM,\RM)$ to Agent~1, who fuses $(\yM,\RM)$ with its local estimate. To be able to use $(\yM,\RM)$ Agent~1 also requires $\M$, but to simply transmit $(\yM,\RM,\M)$ is costly. Instead, it is more efficient to transmit $(\yM,\Phi)$, where
\begin{equation}
	\Phi = \BBM \phi_1 \\ \vdots \\ \phi_m \EBM = \BBM r_1\psi_1  \\ \vdots \\ r_m\psi_m \EBM = \RM\M,
\end{equation}
with $r_i=[\RM]_{ii}$, and where $\phi_i$ and $\psi_i$ represent the \ith row of $\MR$ and $\M$, respectively. When Agent~1 receives $(\yM,\Phi)$, $\RM$ and $\M$ are computed as
\begin{align}
	\RM &= \diag\left(\|\phi_1\|,\hdots,\|\phi_m\|\right), & \M &= \RM\inv\Phi.
\end{align} 

The number of exchanged parameters can be further reduced by exploiting the fact that $\psi_i\psi_j\trnsp=\delta_{ij}$, and hence $\phi_i\phi_j\trnsp=0$ if $i\neq j$. For $\Phi\in\realsmnb$, the agent needs to transmit $m$ components of $\phi_1$, but only $m-i+1$ components of $\phi_i$. The components not transmitted are given by utilizing $\phi_i\phi_j\trnsp=0$ for $i\neq j$ sequentially. For example, assume $m=3$ and let $\phi_{i,j}=[\phi_i]_j$. Assume that Agent~1 does not have access to $\phi_{2,1}$, $\phi_{3,1}$, and $\phi_{3,2}$. To find the missing components, we first solve for $\phi_{2,1}$ in $\phi_1\phi_2\trnsp=0$, which is equivalent to solving $A_2\phi_{2,1} = b_2$, where
\begin{align}
	A_2 &= \phi_{1,1}, & b_2 &= -\sum_{j=2}^{\nb}\phi_{1,j}\phi_{2,j}. \nonumber
\end{align}
Then $\phi_{3,1}$ and $\phi_{3,2}$ are computed by solving $A_3\BBM\phi_{3,1}\\ \phi_{3,2}\EBM = b_3$, where
\begin{align*}
	A_3 &= \BBM\phi_{1,1} & \phi_{1,2} \\ \phi_{2,1} & \phi_{2,2} \EBM, & b_3 &= -\BBM \sum_{j=3}^{\nb}\phi_{1,j}\phi_{3,j} \\ \sum_{j=3}^{\nb}\phi_{2,j}\phi_{3,j} \EBM. 
\end{align*}
This sequential approach generalizes to arbitrary $i$. Let superscripts $M_i$ and $K_i$, \eg, $a^{M_i}$ and $a^{K_i}$, denote a vector that contains the components of a vector, \eg, $a$, corresponding to the missing and available components of $\phi_i$, respectively. In particular, the union and intersection of the components of $\phi_i^{K_i}$ and $\phi_i^{M_i}$ correspond to $\phi_i$ and the empty set, respectively. For arbitrary $i\in\{2,\dots,\nb\}$, the missing components $\phi_i^{M_i}$ are computed by solving
\begin{align}
	A_i(\phi_i^{M_i})\trnsp = b_i, \,\,
	A_i = \BBM \phi_1^{M_i} \\ \vdots \\ \phi_{i-1}^{M_i} \EBM, \,\,
	b_i = -\BBM \phi_1^{K_i}(\phi_i^{K_i})\trnsp \\ \vdots \\ \phi_{i-1}^{K_i}(\phi_i^{K_i})\trnsp \EBM.
	\label{eq:sys-of-eq}
\end{align}	
Let $\Phi^K$ denote the parts of $\Phi$ that are actually transmitted using the proposed message encoding, \ie, 
\begin{equation}
	\Phi^K = \BBM \phi_1&\phi_2^{K_2}&\hdots&\phi_m^{K_m} \EBM.
\end{equation}

Care must be taken to ensure that $\det(A_i)\neq0$, since otherwise the system of equations in \eqref{eq:sys-of-eq} is not solvable. The condition $\det(A_i)\neq 0$ has to be checked by the transmitting agent. This means that it is not possible by default to always skip transmitting the first components of $\phi_i$ since this could potentially imply that $\det(A_i)=0$. Hence, alongside the transmission of the information contained in $\phi_i$, \ie, $\phi_i^{K_i}$, the agent must also transmit information about which components are not transmitted, \ie, which elements of $\phi_i$ are included in $\phi_i^{M_i}$. This requires the transmission of a few extra parameters. For example, if $m=3$, then $1+2=3$ extra parameters must be transmitted that indicate which components of $\phi_1$ and $\phi_2$ are not transmitted. For general $m$, a number of 
\begin{equation}
	\nexcl = \sum_{i=1}^m (i-1) = \frac{m(m-1)}{2},
	\label{eq:nexcl}
\end{equation}
extra parameters must be included in the message. However, as described below, to enumerate the excluded components of the $\phi_i$, where $i=1,\dots,m$, only requires a few extra bits, which is negligible compared to the remaining parameters to be exchanged.

In summary, the transmitted message is given by $(\yM,\Phi^K,\calJ)$, where $\calJ$ is an $\nexcl$-dimensional vector containing the indices of the missing components of $\phi_i$, where $i=1,\dots,m$.\footnote{\matlab code for message coding is available at \url{https://github.com/robinforsling/dtt}.}

\subsection{Communication Reduction} \label{sec:dr:communication-reduction}

The total number of components required to be transmitted when exchanging $(\yM,\Phi^K)$ is given by
\begin{align}
	\ndr 
	&= m + \frac{\nb(\nb+1)}{2} - \frac{(\nb-m)(\nb-m+1)}{2} \nonumber \\
	&= \frac{2m\nb-m^2+3m}{2},	
	\label{eq:dr:number-of-tx-parameters}
\end{align}
where $m$ is due to $\yM$ and the remains are due to $\MR^K$. The ratio between $\ndr$ and the total number of components $\nfull=n_2(n_2+3)/2$ of a full estimate $(y_2,R_2)$ is
\begin{equation}
	\frac{\ndr}{\nfull} = \frac{2m\nb-m^2+3m}{\nb(\nb+3)} = \frac{m}{\nb}\left(2 - \frac{m+3}{\nb+3}\right)	.
	\label{eq:comm-red}
\end{equation}
The communication reduction as a function of $\nb$ when using \abbrDR is illustrated in Figure~\ref{fig:dr:communication-gain} for different values of $m$. For example, if $\nb=9$, then the communication savings are approximately $81\%$ for $m=1$ and $50\%$ for $m=3$. If $\nb=15$, then the communication savings are approximately $88\%$ for $m=1$ and $67\%$ for $m=3$. As a comparison, $\ndca=2n_2$ is included, where $\ndca$ refer to the number of parameters transmitted using the \abbrDCA framework \cite{Forsling2019Fusion}. If $m=2$, then $\ndr=2\nb+1=\ndca+1$. Hence, when comparing the performance of the two communication reduction techniques, it is reasonable to compare \abbrDCA with \abbrDR in the case of $m=2$.

\begin{figure}[tb]
	\centering
	\begin{tikzpicture}[xscale=.75,yscale=.8]
		\input{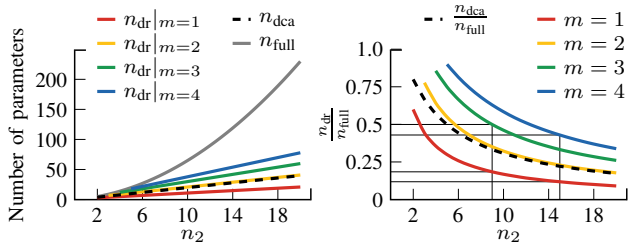}
	\end{tikzpicture}
	\caption{Communication reduction as a function of $n_2$ when using \abbrDR for different values of $m$. \abbrDCA is included for comparison. }
	\label{fig:dr:communication-gain}
\end{figure}

To continue the discussion about the $\nexcl$ extra parameters to be exchanged, assume that the size of each parameter in $(\yM,\Phi^K)$ is 32 bits. Moreover, assume that the size of each parameter in $\calJ$ is 4 bits, which is valid for $n_2\leq16$. Then $(\yM,\Phi^K)$ and $\calJ$ consist of $32\ndr$ bits and $4\nexcl$ bits, respectively. In Table~\ref{tab:extra-bits}, the ratio
\begin{equation}
	\frac{4\nexcl}{32\ndr} = \frac{m-1}{8(2n_2-m+3)},
\end{equation}
is illustrated for a number of values of $m$ and $n_2$. The amount of extra bits required to transmit $\calJ$ is marginal in this configuration. The conclusion is that the bit size of $(\yM,\Phi^K)$ and the bit size of $(\yM,\Phi^K,\calJ)$ are approximately equal.

\begin{table}[tb]
	\centering
	\caption{Percentage of Extra Bits Required for $\calJ$}
	\label{tab:extra-bits}	
	\begin{footnotesize}
	\begin{minipage}{.32\columnwidth}
	\begin{tabular}{ccc}
	\toprule\midrule
	$m$ & $\nb$ & $\frac{\nexcl}{8\ndr}$ \\	
	\midrule
		2 & 4 & 1.39\% \\ 
		2 & 6 & 0.96\% \\ 
		3 & 6 & 2.08\% \\ 
	\bottomrule
	\end{tabular}
	\end{minipage}
	\begin{minipage}{.32\columnwidth}
	\begin{tabular}{ccc}
	\toprule\midrule
	$m$ & $\nb$ & $\frac{\nexcl}{8\ndr}$ \\	
	\midrule
		3 & 9 & 1.39\% \\ 
		5 & 9 & 3.12\% \\ 
		7 & 9 & 5.36\% \\ 
	\bottomrule
	\end{tabular}
	\end{minipage}
	\begin{minipage}{.32\columnwidth}
	\begin{tabular}{ccc}
	\toprule\midrule
	$m$ & $\nb$ & $\frac{\nexcl}{8\ndr}$ \\	
	\midrule
		3 & 12 & 1.04\% \\ 
		6 & 12 & 2.98\% \\ 
		9 & 15 & 4.17\% \\ 
	\bottomrule
	\end{tabular}
	\end{minipage}			
	\end{footnotesize}
\end{table}

\section{Numerical Evaluation} \label{sec:evaluation}

The motivation for using the \abbrGEVO framework instead of, \eg, \abbrPCO is its ability to reduce dimensionality for optimal fusion performance. In this section, we evaluate \abbrGEVO numerically. We start with a simple fusion example that is parametrized by the correlation between the estimates to be fused. Then, \abbrGEVO is evaluated in a more realistic decentralized target tracking (\abbrDTT) scenario where we also compare the performance with \abbrPCO and \abbrDCA\footnote{\matlab code for the evaluation is available at \url{https://github.com/robinforsling/dtt}.}.

\subsection{Parametrized Fusion Example}


\subsubsection{Simulation Specifications} 

Assume that $\nagent=2$, $\nx=\nb=6$, and $H_2=I$. The problem is to fuse $(y_1,R_1)$ and $(\yM,\RM)$. The scenario is parametrized in $\rho\in[0,1]$ according to
\begin{small}
\begin{align*}
	R_1\inv &= (1-\rho) A\inv + \rho\Gamma\inv, & 
	A\inv &= \diag\left(64,32,16,8,4,2\right), \\
	R_2\inv &= (1-\rho) B\inv + \rho\Gamma\inv, &
	B\inv &= \diag\left(5,8,13,21,34,55\right), 
\end{align*}	
\end{small}
with
\begin{align*}
	\Gamma\inv &= \BBSM 16&4&4&0&-2&0\\4&20&8&-8&-4&-4\\4&8&30&0&-4&-4\\0&-8&0&50&0&0\\-2&-4&-4&0&10&0\\0&-4&-4&0&0&20 \EBSM.
\end{align*}
The quantity $\rho\Gamma\inv$ is interpreted as common information. By construction, $R_{12}=\rho R_1\Gamma\inv R_2$. The parameter $\rho$ is varied in the interval $[0,1]$, and $(\xhat,P)$ is computed for each $\rho$ and for each of the considered methods. As $\rho$ increases, $y_1$ and $y_2$ become more correlated, eventually becoming fully correlated at $\rho=1$.

The following methods are compared:
\begin{itemize}
	\item \abbrKF: The Kalman fuser as defined in \eqref{eq:method:kf-dr}, where $\M$ is derived using Algorithm~\ref{alg:gevo-kf}.
	\item \abbrCI: Covariance intersection as defined in \eqref{eq:method:ci-dr}, where $\M$ is derived using Algorithm~\ref{alg:gevo-ci}.
	\item \abbrLE: The largest ellipsoid method as defined in \eqref{eq:method:le-dr:estimate}, where $\M$ is derived using Algorithm~\ref{alg:gevo-le}.
\end{itemize}
As was pointed out earlier, applying \abbrKF in decentralized SNs requires some sort of decorrelation mechanism. Otherwise, the independence assumed in \abbrKF is violated. In the next comparison, \abbrKF is based on two different assumptions: (i) $y_2$ is decorrelated by the removal of common information such that $R_2\inv=(1-\rho)B\inv$. This case is denoted \emph{decorrelated \abbrKF} (dKF). (ii) $y_1$ and $y_2$ are uncorrelated. This case is denoted \abbrNKF. Table~\ref{tab:parametrized-fusion-example} provides a summary of how different quantities are computed in the simulations.

\begin{table}[tb]
	\centering
	\caption{Computation of Simulated Quantities}
	\label{tab:parametrized-fusion-example}	
	\begin{footnotesize}
	\begin{tabular}{ccccc}
	\toprule\midrule
	\textbf{Method} & $R_2\inv$ & $R_{12}$ & $\M$ & $P$ \\	
	\midrule
	dKF & {$(1-\rho)B\inv$} & $0$ & Algorithm~\ref{alg:gevo-kf} & \eqref{eq:method:kf-dr} \\
	\abbrCI & {$(1-\rho)B\inv+\rho\Gamma\inv$} & $\rho R_1\Gamma\inv R_2$ & Algorithm~\ref{alg:gevo-ci} & \eqref{eq:method:ci-dr} \\
	\abbrLE & {$(1-\rho)B\inv+\rho\Gamma\inv$} & $\rho R_1\Gamma\inv R_2$ & Algorithm~\ref{alg:gevo-le} & \eqref{eq:method:le-dr:estimate} \\
	\abbrNKF & {$(1-\rho)B\inv+\rho\Gamma\inv$} & $\rho R_1\Gamma\inv R_2$ & Algorithm~\ref{alg:gevo-kf} & \eqref{eq:method:kf-dr}  \\
	\bottomrule
	\end{tabular}
	\end{footnotesize}
\end{table}

\subsubsection{Evaluation Measures}

An estimate is conservative \textiff $P\succeq\Ptilde$, where $\Ptilde=\EV(\xtilde\xtilde\trnsp)$ and $\xtilde=\xhat-x$. Let $P=LL\trnsp$, where $L$ is invertible since $P\succ 0$. Since $P\succeq\Ptilde$ is equivalent to $I\succeq L\inv\Ptilde L\invtrnsp$, an estimate is conservative \textiff $\lambdamax(L\inv\Ptilde L\invtrnsp)\leq 1$. We define the \emph{conservativeness index} (\abbrCOIN) according to 
\begin{equation}
	\text{COIN} = \lambdamax\left(L\inv\Ptilde L\invtrnsp\right).
	\label{eq:coin}
\end{equation}
A common measure for conservativeness is the \emph{average normalized estimation error squared} (\abbrANEES, \cite{Li2006Fusion}). Since we have access to both $P$ and $\Ptilde$, \abbrANEES can be computed as
\begin{equation}
	\text{ANEES} = \frac{1}{\nx} \trace(P\inv\Ptilde). 
	\label{eq:anees}
\end{equation}

To evaluate performance, the \emph{root mean trace ratio} (\abbrRMTR) is used, which is defined as follows. Assume a fixed $m$ and let $P$ be the covariance computed by a certain method defined above. Let $\Pbc$ be the covariance computed using \eqref{eq:method:bsc-dr} with $\M$ derived using Algorithm~\ref{alg:gevo} where $R_{12}$ is known. Then
\begin{equation}
	\text{\abbrRMTR} = \frac{\sqrt{\trace(P)}}{\sqrt{\trace(\Pbc)}}.
\end{equation}

\subsubsection{Results}

\begin{figure}[tb]
	\centering
	\begin{tikzpicture}[xscale=6,yscale=1.05]
		\input{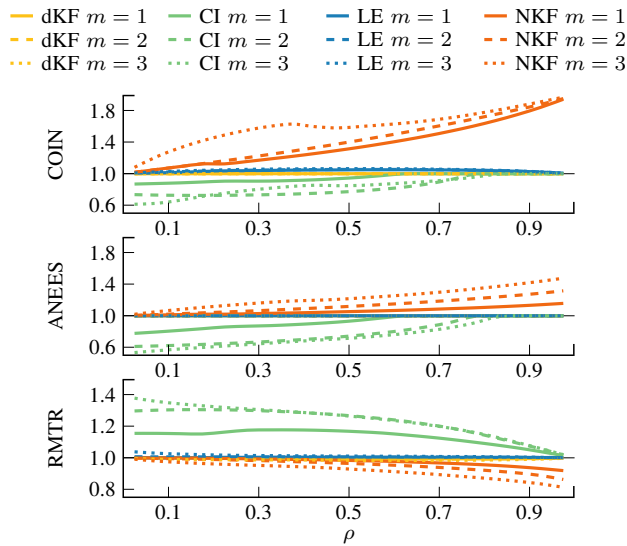}
	\end{tikzpicture}
	\caption{Results from the parametrized fusion example. Solid, dashed, and dotted lines refer to $m=1$, $m=2$, and $m=3$, respectively.}
	\label{fig:dr:gevo:evaluation:fusion-example}
\end{figure}

The results are visualized in Figure~\ref{fig:dr:gevo:evaluation:fusion-example}. Solid, dashed, and dotted lines refer to $m=1$, $m=2$, and $m=3$, respectively.

dKF is able to achieve \abbrRMTR slightly below 1. This is possible because dKF, using the ideal decorrelation step, exploits more structure in the problem compared to what is possible using the \abbrMSE optimal estimator in \eqref{eq:method:bsc-dr}. dKF is conservative \wrt both \abbrCOIN and \abbrANEES. On the other hand, \abbrNKF provides \abbrRMTR$\leq1$ but at the cost of not being conservative for $\rho>0$. \abbrNKF hence utilizes more information than is actually available. This effect becomes more prominent for large $m$ and $\rho$.

The relative performance of \abbrCI increases as $\rho$ increases, but for small $\rho$, the performance is quite poor. The reason for this is that since \abbrCI is conservative for all possible $R_{12}$, \abbrCI implicitly assumes strong correlations, but small $\rho$ means weak correlations. \abbrCOIN is typically larger for large $m$ in the case of \abbrCI.

The \abbrLE method provides relatively good performance for all $\rho$. \abbrLE is conservative \wrt \abbrANEES but not \wrt \abbrCOIN, for which it provides values marginally larger than 1.

%

\subsection{Decentralized Target Tracking Example}


\subsubsection{Simulation Specifications}

\begin{figure}[tb]
	\centering	
	\begin{tikzpicture}[scale=.25]
		\input{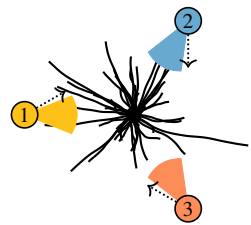}
	\end{tikzpicture}
	\caption{The \abbrDTT scenario. The target is initially located at the black circle. Three agents (colored numbered circles) estimate the target and exchange local estimates. The arrows defines the communication topology. Gray lines represent target trajectories in different \abbrMC runs.}
	\label{fig:evaluation:dtt:scenario}
\end{figure}

The \abbrDTT scenario consists of $N=3$ agents and one dynamic target in $d=2$ spatial dimensions. A \emph{Monte Carlo} (\abbrMC) approach is considered, where $M$ denotes the number of \abbrMC runs. The scenario is illustrated in Fig.~\ref{fig:evaluation:dtt:scenario}. A \emph{constant velocity model} (\abbrCVM, \cite{Li2003TAES}) is assumed for the target such that $x\in\reals^4$. In each \abbrMC simulation, the target initial state $x_0$ is sampled from $\calN(0,P_0)$, where $P_0$ is predetermined. The target then evolves according to the \abbrCVM
\begin{align}
	x_{k+1} &= F_kx_k + w_k, & w_k &\sim \calN(0,Q_k),
\end{align}
for a predetermined $Q_k$. As a consequence, the trajectory is unique in each \abbrMC simulation. A measurement $z_{i,k}$ of the \ith agent at time $k$ model as
\begin{align}
	z_{i,k} &= \BBM 1&0&0&0 \\ 0&1&0&0 \EBM x_k + e_{i,k}, & e_{i,k}&\sim\calN(0,C_{i}),
\end{align}
where $e_{i,k}$ is the measurement noise at time $k$ and $C_{i}\in\pdset^2$ the measurement noise covariance of Agent~$i$. A Kalman filter is used for state estimation. \abbrBSC, \abbrKF, \abbrCI, and \abbrLE are used for fusion. The $Q_k$ used for sampling the target trajectory is the same $Q_k$ used in all local filters. Since the initial state, the process noise, and the measurement are all Gaussian distributed, the \ith local estimate at time $k$ is on the form
\begin{align}
	y_{i,k} &= x_k + v_{i,k}, & v_{i,k}&\sim\calN(0,R_{i,k}).
\end{align}
Hence, in this case, the estimation error $\xtilde_k$ is a zero-mean Gaussian distributed random variable. Relevant simulation parameters are summarized in Table~\ref{tab:evaluation:dtt:simulation-parameters}. The communication is scheduled such that Agent~$i$ transmits \textiff $k\in\{i,i+N,i+2N,\dots\}$. The communication topology is given by the arrows in Fig.~\ref{fig:evaluation:dtt:scenario}. It is assumed that $H_2=I$ and $m=2$.

\begin{table}[tb]
	\centering
	\caption{Parameters used in the simulations}
	\label{tab:evaluation:dtt:simulation-parameters}
	\begin{footnotesize}
	\begin{tabular}{cl}
		\toprule\midrule
		\textbf{Parameter} & \textbf{Comment} \\
		\midrule
		$m=2$ & dimensionality of the \abbrDR estimate $\yM$ \\
		$d=2$ & spatial dimensionality \\
		$\nx=4$ & state dimensionality \\
		$T_s=1$ & sampling time [\second] \\
		$\sigma_{w}=2$ & standard deviation of process noise [\meter\second$^{-\frac{3}{2}}$] \\		
		$C_1=\BBSM100&0\\0&10\EBSM$ & measurement noise covariance of Agent~1 [\meter$^2$] \\
		$C_2=\BBSM33&39\\39&78\EBSM$ & measurement noise covariance of Agent~2 [\meter$^2$] \\
		$C_3=\BBSM33&-39\\-39&78\EBSM$ & measurement noise covariance of Agent~3 [\meter$^2$] \\
		$M=10\,000$ & number of \abbrMC runs \\
		\bottomrule
	\end{tabular}	
	\end{footnotesize}
\end{table}

The following methods are compared:
\begin{itemize}
	\item \abbrNKF: The \naive Kalman fuser as defined in \eqref{eq:method:kf-dr}, where $\M$ is derived using Algorithm~\ref{alg:gevo-kf}.
	\item \abbrCI: Covariance intersection as defined in \eqref{eq:method:ci-dr}, where $\M$ is derived using Algorithm~\ref{alg:gevo-ci}.
	\item \abbrLE: The largest ellipsoid method as defined in \eqref{eq:method:le-dr:estimate}, where $\M$ is derived using Algorithm~\ref{alg:gevo-le}.
	\item DCA-EIG: Covariance intersection as defined in \eqref{eq:method:ci-dr} with $\M=I$. Agent~$i$ transmits $(y_i,\Ds_i)$, where $\Ds_i$ is computed according to Algorithm~\ref{alg:dca-eig}.
\end{itemize}

\subsubsection{Evaluation Measures} 

\abbrCOIN and \abbrANEES are used to evaluate conservativeness. However, in this \abbrMC evaluation these are defined slightly differently.

Let $\xhat_k$ denote an estimate of $x_k$ after fusion in some agent. In this case, we do not have access to $\Ptilde_k=\EV(\xtilde_k\xtilde_k\trnsp)$, where $\xtilde_k=\xhat_k-x_k$. However, $\Ptilde_k$ can be approximated by
\begin{equation}
	\Phat_k = \frac{1}{M} \sum_{j=1}^M \xtilde_k^j(\xtilde_k^j)\trnsp,
	\label{eq:Phat}
\end{equation}
where $\xtilde_k^j$ is the error in the \jth \abbrMC run. By construction $P_k^j=P_k=L_kL_k\trnsp$ is the same over all \abbrMC runs. This leads to
\begin{equation}
	\text{\abbrCOIN}_k = \lambdamax\left(L_k\inv\Phat_k L_k\invtrnsp\right).
	\label{eq:coin-approx}
\end{equation}
\abbrANEES is now defined as
\begin{equation}
	\text{\abbrANEES}_k = \frac{1}{\nx} \trace(P_k\inv \Phat_k).
	\label{eq:anees-approx}
\end{equation}

To evaluate performance, we use the \abbrRMTR which in this case is defined as
\begin{equation}
	\text{\abbrRMTR}_k = \frac{\sqrt{\trace(P_k)} \text{ using \abbrDR estimates}}{\sqrt{\trace(P_k)}  \text{ using full estimates}}.
\end{equation}
By construction, \abbrRMTR$\geq 1$.

\subsubsection{Results}

Fig.~\ref{fig:evaluation:dtt:results} illustrates the results for Agent~3 only\footnote{The estimation results for the other two agents are similar.}. \abbrCOIN, \abbrANEES, and \abbrRMTR are computed at time instants where fusion is performed. As a comparison, the corresponding results for the \abbrPCO method have been included for \abbrRMTR. The \abbrPCO curves are given by the dotted lines and use the same color coding as the corresponding \abbrGEVO curves, \eg, the red dashed lines refer to \abbrNKF with $\M$ derived using the \abbrPCO method.

\begin{figure}[tb]
	\centering
	\begin{tikzpicture}[xscale=.45,yscale=.75]
		\input{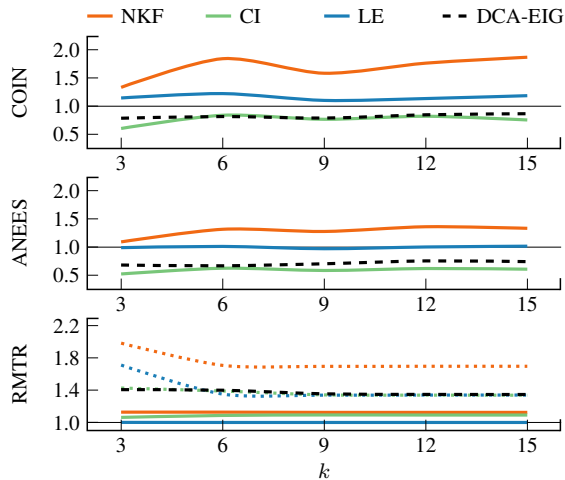}
	\end{tikzpicture}
	\caption{Results from the \abbrGEVO method evaluation. The dotted curves illustrate the results when using the \abbrPCO method.}
	\label{fig:evaluation:dtt:results}
\end{figure}

\abbrCI is the only method that is conservative \wrt both \abbrCOIN and \abbrANEES. \abbrLE is conservative \wrt \abbrANEES. \abbrNKF is never conservative. As indicated by the \abbrRMTR plot, \abbrNKF, \abbrCI, and \abbrLE achieves significantly better performance when using \abbrGEVO instead of \abbrPCO. For \abbrLE, the performance loss of transmitting $(\yM,\RM)$ instead of $(y_2,R_2)$ is zero. This can be explained by the binary behavior suggested by \abbrLE, \cf \eqref{eq:method:le-dr:transformed-estimate}, where each component of the fused estimate is taken exclusively from either of the two estimates that are fused. In this case, it is useless to increase $m$ since the additional information in $(\yM,\RM)$ cannot be utilized by \abbrLE. It can also be seen that \abbrGEVO outperforms DCA-EIG.


\section{Conclusions} \label{sec:conclusions}

Decentralized state estimation under communication constraints has been considered. The problem involved agents sharing information in a decentralized sensor network, where only \abbrDR estimates were allowed to be communicated. In particular, we have proposed the \emph{generalized eigenvalue optimization} (\abbrGEVO) method as a framework for computing \abbrDR estimates optimized for fusion performance. Essentially, the \abbrGEVO method computes the matrix $\M$ which defines the \abbrDR estimate to be communicated.

The \abbrGEVO framework was based on a generalization of the \abbrBSC formulas. The main focus has been on three variants of \abbrGEVO corresponding to the following fusion methods: \abbrKF, \abbrCI, and the \abbrLE method. We have shown how the optimal linear mapping is computed for \abbrBSC, \abbrKF, and \abbrLE. For \abbrCI, an alternating minimization algorithm has been proposed, which was shown to converge to a stationary point. 

Different aspects related to the \abbrGEVO method were discussed. For instance, dealing with singularities when deriving $\M$ and how to assign the number of rows in $\M$. A message-coding solution was provided for efficient communication when using \abbrDR estimates. We have also compared the \abbrGEVO framework, both theoretically and numerically, with two baseline methods. The numerical examples demonstrated the usability of the \abbrGEVO framework and its superiority to the baseline methods.




\bibliographystyle{IEEEtran.bst}
\bibliography{bib/myrefs}

\begin{thebibliography}{10}
\providecommand{\url}[1]{#1}
\csname url@samestyle\endcsname
\providecommand{\newblock}{\relax}
\providecommand{\bibinfo}[2]{#2}
\providecommand{\BIBentrySTDinterwordspacing}{\spaceskip=0pt\relax}
\providecommand{\BIBentryALTinterwordstretchfactor}{4}
\providecommand{\BIBentryALTinterwordspacing}{\spaceskip=\fontdimen2\font plus
\BIBentryALTinterwordstretchfactor\fontdimen3\font minus
  \fontdimen4\font\relax}
\providecommand{\BIBforeignlanguage}[2]{{%
\expandafter\ifx\csname l@#1\endcsname\relax
\typeout{** WARNING: IEEEtran.bst: No hyphenation pattern has been}%
\typeout{** loaded for the language `#1'. Using the pattern for}%
\typeout{** the default language instead.}%
\else
\language=\csname l@#1\endcsname
\fi
#2}}
\providecommand{\BIBdecl}{\relax}
\BIBdecl

\bibitem{Uhlmann2003IF}
J.~K. Uhlmann, ``Covariance consistency methods for fault-tolerant distributed
  data fusion,'' \emph{Information Fusion}, vol.~4, no.~3, pp. 201--215, Sep.
  2003.

\bibitem{Forsling2022TSP}
R.~Forsling, A.~Hansson, F.~Gustafsson, Z.~Sjanic, J.~L\"{o}fberg, and
  G.~Hendeby, ``Conservative linear unbiased estimation under partially known
  covariances,'' \emph{IEEE Transactions on Signal Processing}, vol.~70, pp.
  3123--3135, Jun. 2022.

\bibitem{Kimura2005ITCC}
N.~{Kimura} and S.~{Latifi}, ``A survey on data compression in wireless sensor
  networks,'' in \emph{Proceedings of the IEEE International Conference on
  Information Technology: Coding and Computing}, vol.~2, Las Vegas, NV, USA,
  Apr. 2005, pp. 8--13.

\bibitem{Chen2022TSIPN}
Y.~Chen, R.~S. Blum, and B.~M. Sadler, ``Communication-efficient federated
  learning using censored heavy ball descent,'' \emph{IEEE Transactions on
  Signal and Information Processing over Networks}, vol.~8, pp. 983--996, 2022.

\bibitem{Liu2023TSIPN}
H.~Liu, J.~Zhang, A.~M.-C. So, and Q.~Ling, ``A communication-efficient
  decentralized {Newton's} method with provably faster convergence,''
  \emph{IEEE Transactions on Signal and Information Processing over Networks},
  vol.~9, pp. 427--441, 2023.

\bibitem{Forsling2023Phd}
R.~Forsling, ``The dark side of decentralized target tracking: {U}nknown
  correlations and communication constraints,'' {Dissertations. No. 2359},
  Link\"{o}ping University, Link\"{o}ping, Sweden, Nov. 2023.

\bibitem{Forsling2022Fusion}
R.~Forsling, Z.~Sjanic, F.~Gustafsson, and G.~Hendeby, ``Optimal linear fusion
  of dimension-reduced estimates using eigenvalue optimization,'' in
  \emph{Proceedings of the 25th {IEEE} International Conference on Information
  Fusion}, Link\"{o}ping, Sweden, Jul. 2022.

\bibitem{Julier1997ACC}
S.~J. Julier and J.~K. Uhlmann, ``A non-divergent estimation algorithm in the
  presence of unknown correlations,'' in \emph{Proceedings of the 1997 American
  Control Conference}, Albuquerque, NM, USA, Jun. 1997, pp. 2369--2373.

\bibitem{Benaskeur2002IECON}
A.~R. {Benaskeur}, ``Consistent fusion of correlated data sources,'' in
  \emph{Proceedings of the 28th Annual Conference of the {IEEE} Industrial
  Electronics Society}, Sevilla, Spain, Nov. 2002, pp. 2652--2656.

\bibitem{Liggins2009-ch17}
M.~E. {Liggins} and K.-C. {Chang}, ``Distributed fusion architectures,
  algorithms, and performance within a network-centric architecture,'' in
  \emph{Handbook of Multisensor Data Fusion: Theory and Practice}, M.~Liggins,
  D.~Hall, and J.~Llinas, Eds.\hskip 1em plus 0.5em minus 0.4em\relax Boca
  Raton, FL, USA: CRC Press, 2009, ch.~17.

\bibitem{Toh1997WANET}
C.~K. Toh, \emph{Wireless ATM and Ad-Hoc Networks: Protocols and
  Architectures}.\hskip 1em plus 0.5em minus 0.4em\relax New York, NY, USA:
  Springer-Verlag, 1997.

\bibitem{Grime1994CEP}
S.~H. Grime and H.~F. Durrant-Whyte, ``Data fusion in decentralized sensor
  networks,'' \emph{Control Engineering Practice}, vol.~2, no.~5, pp. 849--863,
  Oct. 1994.

\bibitem{Julier2009-ch14}
S.~J. Julier and J.~K. Uhlmann, ``General decentralized data fusion with
  covariance intersection,'' in \emph{Handbook of Multisensor Data Fusion:
  Theory and Practice}, M.~Liggins, D.~Hall, and J.~Llinas, Eds.\hskip 1em plus
  0.5em minus 0.4em\relax Boca Raton, FL, USA: CRC Press, 2009, ch.~14.

\bibitem{Castanedo2013TSWJ}
F.~Castanedo, ``A review of data fusion techniques,'' \emph{The Scientific
  World Journal}, vol. 2013, pp. 1--19, 2013.

\bibitem{Khan2008TSP}
U.~A. Khan and J.~M.~F. Moura, ``Distributing the {K}alman filter for
  large-scale systems,'' \emph{IEEE Transactions on Signal Processing},
  vol.~56, no.~10, pp. 4919--4935, Oct. 2008.

\bibitem{Govaers2010Fusion}
F.~Govaers and W.~Koch, ``Distributed {K}alman filter fusion at arbitrary
  instants of time,'' in \emph{Proceedings of the 13th {IEEE} International
  Conference on Information Fusion}, Edinburgh, Scotland, Jul. 2010, pp. 1--8.

\bibitem{Lopes2008TSP}
C.~G. Lopes and A.~H. Sayed, ``Diffusion least-mean squares over adaptive
  networks: Formulation and performance analysis,'' \emph{IEEE Transactions on
  Signal Processing}, vol.~56, no.~7, pp. 3122--3136, Jul. 2008.

\bibitem{Cattivelli2010TSP}
F.~S. Cattivelli and A.~H. Sayed, ``Diffusion {LMS} strategies for distributed
  estimation,'' \emph{IEEE Transactions on Signal Processing}, vol.~58, no.~3,
  pp. 1035--1048, Mar. 2010.

\bibitem{Razzaque2013ACM}
M.~A. Razzaque, C.~Bleakley, and S.~Dobson, ``Compression in wireless sensor
  networks: {A} survey and comparative evaluation,'' \emph{ACM Transactions on
  Sensor Networks}, vol.~10, no.~1, Dec. 2013.

\bibitem{Zhang2003Fusion}
K.~Zhang, X.-R. Li, P.~Zhang, and H.~Li, ``Optimal linear estimation fusion -
  part {VI}: {S}ensor data compression,'' in \emph{Proceedings of the 6th
  {IEEE} International Conference on Information Fusion}, Cairns, Queensland,
  Australia, Jul. 2003, pp. 221--228.

\bibitem{Chen2004CDC}
H.~Chen, K.~Zhang, and X.-R. Li, ``Optimal data compression for multisensor
  target tracking with communication constraints,'' in \emph{Proceedings of the
  43rd {IEEE} Conference Decision and Control}, vol.~3, Atlantis, Paradise
  Island, Bahamas, Dec. 2004, pp. 2650--2655.

\bibitem{Zhu2005TSP}
Y.~Zhu, E.~Song, J.~Zhou, and Z.~You, ``Optimal dimensionality reduction of
  sensor data in multisensor estimation fusion,'' \emph{IEEE Transactions on
  Signal Processing}, vol.~53, no.~5, pp. 1631--1639, 2005.

\bibitem{Schizas2007TSP}
I.~D. Schizas, G.~B. Giannakis, and Z.-Q. Luo, ``Distributed estimation using
  reduced-dimensionality sensor observations,'' \emph{IEEE Transactions on
  Signal Processing}, vol.~55, no.~8, pp. 4284--4299, 2007.

\bibitem{Fang2010TSP}
J.~Fang and H.~Li, ``Optimal/near-optimal dimensionality reduction for
  distributed estimation in homogeneous and certain inhomogeneous scenarios,''
  \emph{IEEE Transactions on Signal Processing}, vol.~58, no.~8, pp.
  4339--4353, 2010.

\bibitem{Ma2014TSP}
H.~Ma, Y.-H. Yang, Y.~Chen, K.~J.~R. Liu, and Q.~Wang, ``Distributed state
  estimation with dimension reduction preprocessing,'' \emph{IEEE Transactions
  on Signal Processing}, vol.~62, no.~12, pp. 3098--3110, 2014.

\bibitem{Schizas2015TSP}
I.~D. Schizas and A.~Aduroja, ``A distributed framework for dimensionality
  reduction and denoising,'' \emph{IEEE Transactions on Signal Processing},
  vol.~63, no.~23, pp. 6379--6394, 2015.

\bibitem{Fang2008SPL}
J.~Fang and H.~Li, ``Joint dimension assignment and compression for distributed
  multisensor estimation,'' \emph{IEEE Signal Processing Letters}, vol.~15, pp.
  174--177, 2008.

\bibitem{Fang2011DSP}
------, ``Dimensionality reduction design for distributed estimation in certain
  inhomogeneous scenarios,'' in \emph{Proceedings of the 17th International
  Conference on Digital Signal Processing}, Corfu, Greece, Jul. 2011, pp. 1--6.

\bibitem{Ribeiro2006TSP}
A.~Ribeiro and G.~Giannakis, ``Bandwidth-constrained distributed estimation for
  wireless sensor networks-part {I}: {G}aussian case,'' \emph{IEEE Transactions
  on Signal Processing}, vol.~54, no.~3, pp. 1131--1143, 2006.

\bibitem{Msechu2012TSP}
E.~J. Msechu and G.~B. Giannakis, ``Sensor-centric data reduction for
  estimation with {WSN}s via censoring and quantization,'' \emph{IEEE
  Transactions on Signal Processing}, vol.~60, no.~1, pp. 400--414, 2012.

\bibitem{Funk2021Sensors}
C.~Funk, B.~Noack, and U.~D. Hanebeck, ``Conservative quantization of
  covariance matrices with applications to decentralized information fusion,''
  \emph{Sensors}, vol.~21, no.~9, 2021.

\bibitem{Kar2009ICASSP}
S.~Kar and J.~M.~F. Moura, ``A mixed time-scale algorithm for distributed
  parameter estimation: {N}onlinear observation models and imperfect
  communication,'' in \emph{Proceedings of the {IEEE} International Conference
  on Acoustics, Speech, and Signal Processing}, Taipei, Taiwan, Apr. 2009, pp.
  3669--3672.

\bibitem{Kar2012TIT}
S.~Kar, J.~M.~F. Moura, and K.~Ramanan, ``Distributed parameter estimation in
  sensor networks: {N}onlinear observation models and imperfect
  communication,'' \emph{IEEE Transactions on Information Theory}, vol.~58,
  no.~6, pp. 3575--3605, 2012.

\bibitem{Bar-Shalom1986TAES}
Y.~Bar-Shalom and L.~Campo, ``The effect of the common process noise on the
  two-sensor fused-track covariance,'' \emph{IEEE Transactions on Aerospace and
  Electronic Systems}, vol.~22, no.~6, pp. 803--805, Nov. 1986.

\bibitem{Forsling2020Lic}
R.~Forsling, ``Decentralized estimation using conservative information
  extraction,'' {Licentiate Thesis No. 1897}, Link\"{o}ping University,
  Link\"{o}ping, Sweden, Dec. 2020.

\bibitem{Reinhardt2015SPL}
M.~{Reinhardt}, B.~{Noack}, P.~O. {Arambel}, and U.~D. {Hanebeck}, ``Minimum
  covariance bounds for the fusion under unknown correlations,'' \emph{IEEE
  Signal Processing Letters}, vol.~22, no.~9, pp. 1210--1214, Sep. 2015.

\bibitem{Pearson1901PCA}
K.~Pearson, ``{LIII}. {O}n lines and planes of closest fit to systems of points
  in space,'' \emph{Philosophical Magazine}, vol.~2, no.~11, pp. 559--572,
  1901.

\bibitem{Wang2012KLT}
R.~Wang, \emph{{K}arhunen-{L}o\`{e}ve transform and principal component
  analysis}.\hskip 1em plus 0.5em minus 0.4em\relax Cambridge, UK: Cambridge
  University Press, 2012, p. 412–460.

\bibitem{Scharf1987TASSP}
L.~Scharf and D.~Tufts, ``Rank reduction for modeling stationary signals,''
  \emph{IEEE Transactions on Acoustics, Speech, and Signal Processing},
  vol.~35, no.~3, pp. 350--355, 1987.

\bibitem{Honig2002TCOM}
M.~Honig and J.~Goldstein, ``Adaptive reduced-rank interference suppression
  based on the multistage {W}iener filter,'' \emph{IEEE Transactions on
  Communications}, vol.~50, no.~6, pp. 986--994, 2002.

\bibitem{DeLamare2007TSP}
R.~C. de~Lamare and R.~Sampaio-Neto, ``Reduced-rank adaptive filtering based on
  joint iterative optimization of adaptive filters,'' \emph{IEEE Signal
  Processing Letters}, vol.~14, no.~12, pp. 980--983, 2007.

\bibitem{Forsling2019Fusion}
R.~Forsling, Z.~Sjanic, F.~Gustafsson, and G.~Hendeby, ``Consistent distributed
  track fusion under communication constraints,'' in \emph{Proceedings of the
  22nd {IEEE} International Conference on Information Fusion}, Ottawa, Canada,
  Jul. 2019.

\bibitem{Anstreicher2000SIAM}
K.~Anstreicher and H.~Wolkowicz, ``On {L}agrangian relaxation of quadratic
  matrix constraints,'' \emph{SIAM Journal on Matrix Analysis and
  Applications}, vol.~22, no.~1, pp. 41--55, 2000.

\bibitem{Parlett1998}
B.~N. Parlett, \emph{The Symmetric Eigenvalue Problem}.\hskip 1em plus 0.5em
  minus 0.4em\relax Philadelphia, PA, USA: Society for Industrial and Applied
  Mathematics, 1998.

\bibitem{Golub2013}
G.~H. Golub and C.~F. van Loan, \emph{Matrix Computations}, 4th~ed.\hskip 1em
  plus 0.5em minus 0.4em\relax Baltimore, MD, USA: The Johns Hopkins University
  Press, 2013.

\bibitem{Stewart1990}
G.~W. Stewart and J.-G. Sun, \emph{Matrix Perturbation Theory}.\hskip 1em plus
  0.5em minus 0.4em\relax Boston, MA, USA: Academic Press, 1990.

\bibitem{Greville1966SIAM}
T.~N.~E. Greville, ``Note on the generalized inverse of a matrix product,''
  \emph{SIAM Review}, vol.~8, no.~4, pp. 518--521, 1966.

\bibitem{Trefethen1997NLA}
L.~N. Trefethen and D.~Bau, III, \emph{Numerical Linear Algebra}.\hskip 1em
  plus 0.5em minus 0.4em\relax Philadelphia, PA, USA: Society for Industrial
  and Applied Mathematics, 1997.

\bibitem{Moler1973SIAM}
C.~B. Moler and G.~W. Stewart, ``An algorithm for generalized matrix eigenvalue
  problems,'' \emph{SIAM Journal on Numerical Analysis}, vol.~10, no.~2, pp.
  241--256, 1973.

\bibitem{Tian2010Fusion}
X.~{Tian} and Y.~{Bar-Shalom}, ``On algorithms for asynchronous track-to-track
  fusion,'' in \emph{Proceedings of the 13th {IEEE} International Conference on
  Information Fusion}, Edinburgh, Scotland, Jul. 2010.

\bibitem{Forsling2023Aero}
R.~Forsling, F.~Gustafsson, Z.~Sjanic, and G.~Hendeby, ``Decentralized data
  fusion of dimension-reduced estimates using local information only,'' in
  \emph{Proceedings of the {IEEE} Aerospace Conference}, Big Sky, MT, USA, Mar.
  2023.

\bibitem{Beck2017}
A.~Beck, \emph{First-Order Methods in Optimization}.\hskip 1em plus 0.5em minus
  0.4em\relax Philadelphia, PA, USA: Society for Industrial and Applied
  Mathematics, 2017.

\bibitem{Orguner2017SDF}
U.~Orguner, ``Approximate analytical solutions for the weight optimization
  problems of {CI} and {ICI},'' in \emph{Proceedings of the 2017 IEEE Symposium
  on Sensor Data Fusion: Trends, Solutions, Applications}, Bonn, Germany, Oct.
  2017, pp. 1--6.

\bibitem{Bibby1974GMJ}
J.~Bibby, ``Axiomatisations of the average and a further generalisation of
  monotonic sequences,'' \emph{Glasgow Mathematical Journal}, vol.~15, no.~1,
  p. 63–65, 1974.

\bibitem{Wishart1928Biometrika}
J.~Wishart, ``The generalised product moment distribution in samples from a
  normal multivariate population,'' \emph{Biometrika}, vol. 20A, no. 1/2, pp.
  32--52, 1928.

\bibitem{Noack2016MFI}
B.~{Noack}, J.~{Sijs}, and U.~D. {Hanebeck}, ``Algebraic analysis of data
  fusion with ellipsoidal intersection,'' in \emph{Proceedings of the 2016
  {IEEE} International Conference on Multisensor Fusion and Integration},
  Baden-Baden, Germany, Sep. 2016, pp. 365--370.

\bibitem{Forsling2020Fusion}
R.~Forsling, Z.~Sjanic, F.~Gustafsson, and G.~Hendeby, ``Communication
  efficient decentralized track fusion using selective information
  extraction,'' in \emph{Proceedings of the 23rd {IEEE} International
  Conference on Information Fusion}, Rustenburg, South Africa, Jul. 2020.

\bibitem{Li2006Fusion}
X.-R. Li and Z.~Zhao, ``Measuring estimator's credibility: Noncredibility
  index,'' in \emph{Proceedings of the 9th {IEEE} International Conference on
  Information Fusion}, Florence, Italy, Jul. 2006.

\bibitem{Li2003TAES}
X.-R. Li and V.~P. Jilkov, ``Survey of maneuvering target tracking. {P}art {I}.
  {D}ynamic models,'' \emph{IEEE Transactions on Aerospace and Electronic
  Systems}, vol.~39, no.~4, pp. 1333--1364, 2003.

\end{thebibliography}


\end{document}